\documentclass[11pt]{article}
\usepackage{fullpage}

\usepackage{graphics}
\usepackage[dvips]{epsfig}

\usepackage{amsmath}
\usepackage{amssymb}
\usepackage{amsfonts}
\usepackage{graphicx}

\usepackage{algorithm} 
\usepackage[noend]{algorithmic}
\usepackage{amsthm}
\usepackage{color}

\pagestyle{plain}

\textheight      9.00in
\textwidth       6.50in
\oddsidemargin   0.00in
\evensidemargin  0.00in
\marginparwidth  0.00cm
\marginparsep    0.00cm
\topmargin       0.00in
\headheight      0.00in
\headsep         0.00in

\newtheorem{theorem}{Theorem}[section]
\newtheorem{lemma}{Lemma}[section]

\newtheorem{proposition}{Proposition}[section]

\newcommand{\remove}[1]{}





\begin{document}

\baselineskip  0.2in 
\parskip     0.1in 
\parindent   0.0in 

\title{{\bf Asynchronous Broadcasting with Bivalent Beeps}\footnote{A preliminary version of this paper appeared in the Proc. 23rd International Colloquium
on Structural Information and Communication Complexity (SIROCCO 2016), LNCS 9988.}}

\author{
Kokouvi Hounkanli\thanks{D\'{e}partement d'informatique, Universit\'{e} du Qu\'{e}bec en Outaouais,
Gatineau, Qu\'{e}bec J8X 3X7,
Canada. {\tt houk06@uqo.ca}}
\and
Andrzej Pelc\thanks{D\'{e}partement d'informatique, Universit\'{e} du Qu\'{e}bec en Outaouais,
Gatineau, Qu\'{e}bec J8X 3X7,
Canada. {\tt  pelc@uqo.ca}.
Supported in part by NSERC discovery grant 8136 -- 2013
and by the Research Chair in Distributed Computing of
the Universit\'{e} du Qu\'{e}bec en Outaouais.}
}

\date{ }

\maketitle

\begin{abstract}
 In broadcasting, one node of a network has a message that must be learned by all other nodes. We study deterministic algorithms for this fundamental communication task in a very weak model
 of wireless communication.
 The only signals sent by nodes are {\em beeps}. Moreover, they are delivered to neighbors of the beeping node in an asynchronous way: the time between sending and reception
 is finite but unpredictable. We first observe that under this scenario, no communication is possible, if beeps are all of the same strength. Hence we study broadcasting in the
 {\em bivalent beeping model}, where every beep can be either {\em soft} or {\em loud}.  At the receiving end, if exactly one soft beep is received by a node in a round, 
  it is heard as soft. Any other combination of beeps received in a round is heard as a loud beep. The cost of a broadcasting algorithm is the total number of beeps sent by all nodes.
  
 We consider four levels of knowledge that nodes may have about the network: anonymity (no knowledge whatsoever), ad-hoc (all nodes have distinct labels and every node knows only its own label),
 neighborhood awareness (every node knows its label and labels of all neighbors), and full knowledge (every node knows the entire labeled map of the network and the identity of the source).
 We first show that in the anonymous case, broadcasting is impossible even for very simple networks. For each of the other three knowledge levels we provide upper and lower bounds on the minimum cost of 
 a broadcasting algorithm. Our results show separations between all these scenarios. Perhaps surprisingly, the jump in broadcasting cost between the ad-hoc and neighborhood awareness levels is much larger than between the neighborhood awareness and full knowledge levels, although in the two former levels knowledge
 of nodes is local, and in the latter it is global.

\vspace{2ex}

\noindent {\bf Keywords:} algorithm, asynchronous, broadcasting, deterministic, graph, network, beep. 

\end{abstract}


\vfill

\thispagestyle{empty}

\pagebreak

\section{Introduction}
{\bf The background and the problem.}
Broadcasting is a fundamental communication task in networks. One node of a network, called the {\em source}, has a message that must be learned by all other nodes.  
We study deterministic algorithms for this well-researched task in a very weak model of wireless communication.
The only signals sent by nodes are {\em beeps}. Moreover, they are delivered to neighbors of the beeping node in an asynchronous way: the time between sending and reception
 is finite but unpredictable. Our aim is to study how the combination of two weaknesses of the communication model, very simple and short messages on the one hand,  and the asynchronous way of delivery
 on the other hand, influences efficiency of communication. Each of these two model weaknesses 
 separately has been studied before. Synchronous broadcasting and gossiping with beeps was studied in \cite{CD}. Asynchronous broadcasting in the radio model, where large messages can be sent in 
 a round, was investigated in \cite{CFP,CR,Pe2}. To the best of our knowledge, the combination of the two model weaknesses, i.e., very short messages and asynchronous delivery, has never been studied before.

We first observe that under this very harsh scenario, no communication is possible, if beeps are all of the same strength (see Section 2). Hence we study broadcasting in the
 {\em asynchronous bivalent beeping model}, where every beep can be either {\em soft} or {\em loud}, as this is, arguably, the weakest model under which asynchronous wireless broadcasting can be performed.  
 At the receiving end, if exactly one soft beep is received by a node in a round, 
  it is heard as soft. Any other combination of beeps received in a round is heard as a loud beep. The cost of a broadcasting algorithm is the total number of beeps sent by all nodes. This measures
  (the order of magnitude of) the energy consumption by the network, as the energy used to send a loud beep can be considered to be a constant multiple of that used to send a soft beep.
  
  {\bf The model.}
Communication proceeds in rounds. In each round, a node can either listen, i.e., stay silent,
or send a  {\em soft beep}, or send a {\em loud beep}. For any beep sent by any node, an omniscient asynchronous adversary chooses a non-negative integer $t$, and delivers it to all neighbors of the sending node $t$ rounds later. The delivery delay at all neighbors is the same for a given beep, but may be different for different beeps of the same node and for beeps of different nodes. The only rule that the adversary has to obey regarding delivery of different beeps sent by the same node, is that they must be delivered in the same order as they were sent, and cannot be collapsed in delivery, i.e. two beeps cannot be delivered as one beep. This type of asynchronous adversary was called the {\em node adversary} in \cite{CR} and the {\em strong adversary} in \cite{CFP}. The motivation is similar as in \cite{CR,CFP}. Nodes execute the broadcasting protocol concurrently with other tasks. Beeps to be sent by a node are prepared for transmission (stored), and then each beep (soft or loud) is transmitted in order. The
(unknown) delay between these actions is
decided by the adversary. In our terminology,  storing for transmission corresponds to sending and
actual transmission corresponds to simultaneous delivery to all neighbors.  We assume that, at
short distances between nodes, the travel time of the beep is negligible. The
delay between storing and transmitting (in our terminology, between sending
and delivery) depends on how busy the node is with other concurrently performed computational
tasks. 

At the receiving end, a node can hear something only if it is silent in the delivery round.  If exactly one soft beep is delivered to a node in a round, 
it is heard as soft. Any other combination of beeps delivered to a node from its neighbors in a round (a single strong beep, or more than one beep of any kind) is heard as a loud beep. This way of modeling reception corresponds to a threshold in the listening device: the strength of a single soft beep is below the threshold, and
the strength of a loud beep,
or the combined strength of more than one beep is above the threshold.
The cost of a broadcasting algorithm is the total number of beeps sent by all nodes.

The network is modeled as an $n$-node simple connected undirected graph, referred to as {\em graph}. We use terms ``network'' and ``graph'' interchangeably. 
We consider four levels of knowledge that nodes may have about the network:
\begin{enumerate}
\item
anonymous networks:  nodes do not have any labels and know nothing about the network;
\item
ad-hoc networks: all nodes have distinct labels and every node knows only its own label;
\item
neighborhood-aware networks: all nodes have distinct labels, and every node knows its label and labels of all neighbors;
\item
full-knowledge networks: all nodes have distinct labels, every node knows the entire labeled map of the network and the identity of the source.
\end{enumerate}

The messages to be broadcast are from some set of size $M$, called the 
{\em message space}. Without loss of generality, let the message space be the set of integers
$\{0,\dots , M-1\}$.
Except for the anonymous networks, all nodes have different labels from the set of integers $\{0,\dots , L-1\}$, called the {\em label space}.

{\bf Our results.} 
Our aim is to study how different levels of knowledge about the network influence feasibility and cost of broadcasting in the asynchronous bivalent beeping model.
We first show that, in the anonymous case, broadcasting is impossible even for very simple networks. For each of the other three knowledge levels, broadcasting is feasible, and we provide upper and lower bounds on the minimum cost of a broadcasting algorithm, in terms of the sizes of the network, of the message space and of the label space. Showing an upper bound $UB$ on the cost of broadcasting at a given knowledge level means showing an algorithm which accomplishes broadcasting at this cost, for any network with this knowledge level, and any message to be broadcast. Showing a lower bound $LB$ means that, for any algorithm of lower cost, there is some network at this knowledge level, and some message for which the algorithm fails.

For ad-hoc networks we give an algorithm of cost $2^{O(L+M)^2}$ \footnote{If one of the parameters, $L$ or $M$, is known to the nodes, this complexity can be decreased to $2^{O(LM)}$ (see section 4).}. Since this cost is very large, it is natural to ask if there are broadcasting algorithms
of cost polynomial in $L$ and $M$. The answer turns out to be negative: indeed, we prove a lower bound of $\Omega(2^L)$ on the cost  of any broadcasting algorithm in ad-hoc networks.
For neighborhood-aware networks we prove an upper bound  of $O(n\log M+ e\log L)$, where $n$ is the number of nodes and $e$ is the number of edges, and a lower bound of $\Omega(n\log M +n \log\log L)$.
Finally, for full-knowledge networks, we provide matching upper and lower bounds of $\Theta(n\log M)$.

Note that the above bounds show separations, in terms of broadcasting cost, between all the knowledge levels, in the case often appearing in applications, when the message space is some predetermined
dictionary independent of the network, i.e., its size $M$ is $O(1)$. Indeed, since $L \geq n$, the lower bound $\Omega (2^L)$ for ad-hoc networks exceed the (worst-case) upper bound $O(n^2\log L)$ for known-neighborhood networks,
and the lower bound $\Omega(n\log\log L)$ for  known-neighborhood networks exceed the tight bound $\Theta(n)$ for full-knowledge networks. 

It is interesting to compare the sizes of the two broadcasting cost jumps: the jump between ad-hoc and known-neighborhood networks, and the jump between known-neighborhood and full-knowledge networks. We illustrate it for the commonly assumed case, when the size $L$ of the label space is polynomial in the size $n$ of the network (and the size of the message space is $O(1)$, as before).
The first jump
is at least from $\Omega (2^n)$ to $O(n^2\log n)$, i.e.,  exponential in $n$. The second jump is at most from $O(n^2\log n)$  to $\Theta(n)$, i.e., polynomial in $n$.
This may seem slightly counterintuitive, because both in ad-hoc and in known-neighborhood networks, information available to nodes is local, while in full-knowledge networks it is global.
So at first glance it would seem that the larger jump should occur between known-neighborhood and full-knowledge networks.


{\bf Related work.}
Broadcasting has been studied in various models for over four decades. Early work focused on the telephone model, where in each round communication
proceeds between pairs of nodes forming a matching. Deterministic broadcasting in this model has been studied, e.g.,  in \cite{SCH}. In \cite{FPRU} the authors studied randomized broadcasting.
In the telephone model, studies focused on the time of the communication task and on the number of messages it uses. Early literature on communication in the telephone and related models is surveyed in \cite{FrLa,HHL}. In \cite{AGPV} the authors studied tradeoffs between the radius within which nodes know the network and broadcasting efficiency in the message passing model. Fault-tolerant aspects of broadcasting and gossiping are surveyed in~\cite{Pe}. 

More recently, broadcasting has been studied in the radio model. While radio networks are used to model wireless communication, similarly as the beeping model, 
in radio networks nodes send entire messages of some bounded, or even unbounded size in a single round, which makes communication drastically different from that in the beeping model.
The focus in the literature on radio networks was usually on the time of communication.
Deterministic broadcasting in the radio model was studied, e.g., in \cite{CGR,KP}, and randomized broadcasting was studied in
\cite{KM}. The book \cite{Ko} is devoted to algorithmic aspects of communication in radio networks.

In all the above papers, radio communication was supposed synchronous, i.e., the message was delivered in the same round in which it was sent. Asynchronous broadcasting in radio networks
was studied in \cite{CFP,CR,Pe2}. It is important to stress a significant difference between the radio and the beeping models, in the context of asynchrony. Since in the radio model large messages can be sent and delivered in a single round, asynchrony cannot alter a message, it can only destroy it, by creating unwanted interference. In the beeping model, however, beeps from various senders can be simultaneously delivered by the adversary, thus altering the intended numbers and types of beeps, creating ``new'' messages. 

The beeping model has been introduced in \cite{CK} for vertex coloring, and used
in \cite{AABCHK} to solve the MIS problem, and in \cite{YJYLC} to construct a minimum connected dominating set.
Randomized leader election in the radio and in the beeping model was studied in \cite{GH}. Deterministic leader election in the beeping model was investigated in 
\cite{FSW}. In \cite{HMP}, the authors studied the tasks of global synchronization and consensus using beeps, in the presence of faults. 
In \cite{GN}, the authors studied the quantity of computational resources needed to solve problems in complete networks using beeps. In \cite{MRZ}, various distributed problems were
investigated under several variations of the beeping model from \cite{CK}, and randomized emulations between these models were shown.
In \cite{EP}, the authors studied the task of rendezvous of agents communicating by beeps. The time of synchronous broadcasting and gossiping with beeps was studied in \cite{CD}.

\section{Preliminaries}

The following observation shows that asynchronous broadcasting with beeps of uniform strength is impossible even in very simple graphs. This is the reason why we use the bivalent beeping model.

\begin{proposition}
Asynchronous broadcasting using beeps of uniform strength is impossible even in the two-node graph.
\end{proposition}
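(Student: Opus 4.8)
The plan is to use an indistinguishability (adversary) argument. Consider the two-node graph with nodes $u$ (the source) and $v$, and suppose for contradiction that some deterministic broadcasting algorithm $\cA$ succeeds on this graph for every message. Since beeps are of uniform strength, the only information a node can extract from a round in which it listens is the single bit ``beep heard'' versus ``silence'': when $u$'s message is $m$, node $v$ receives a sequence of such bits, and its behavior (including when, if ever, it decides it has learned the message) is a function of that sequence alone. The key point is that this received sequence is almost entirely under the adversary's control.

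First I would fix the source message to be some value $m_0$ and run $\cA$, letting the adversary deliver every beep sent by $u$ with delay $0$ (synchronously); by assumption $v$ eventually outputs $m_0$ after some finite number of rounds $T$, having heard some sequence $\sigma$ of beeps/silences in rounds $1,\dots,T$. Now consider a different source message $m_1 \neq m_0$. When $u$ runs $\cA$ with message $m_1$, it produces some (possibly different) schedule of beeps over time. The crucial observation is that a beep is an event with no content: the adversary can choose the delivery delay of each beep $u$ sends so as to make the arrival pattern at $v$ coincide exactly with $\sigma$ over the first $T$ rounds --- it can spread out or bunch up arrivals, and in particular it can delay beeps that ``want'' to arrive early and (since delays are finite but unbounded) stall beeps arbitrarily long. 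The only constraints are that beeps from $u$ arrive in order and are not merged, but since $v$ only ever sees a $0/1$ pattern, matching the target pattern $\sigma$ is just a matter of scheduling arrival times, which the adversary can always do as long as $u$ sends at least as many beeps under $m_1$ as the number of beep-arrivals in $\sigma$ (and if it sends fewer, the adversary matches a truncation and has even more freedom). Either way the adversary can arrange that $v$'s first $T$ rounds of input under $m_1$ are identical to those under $m_0$.

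Since $\cA$ is deterministic and $v$'s actions up to round $T$ depend only on what it has heard, $v$ behaves identically in the two executions through round $T$, and in particular at round $T$ it outputs $m_0$ --- which is wrong, since the source message is $m_1$. Hence $\cA$ fails for message $m_1$, contradicting correctness. I would also need to dispose of the symmetric subtlety that $v$ might beep back and $u$'s subsequent behavior might change: but $v$'s beeps are themselves content-free, and in any case the contradiction is reached purely from $v$'s wrong output at a finite time, regardless of what $u$ does afterwards, so two-way interaction does not help.

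The main obstacle is making precise the claim that the adversary can force $v$'s received pattern under $m_1$ to equal a prescribed finite pattern $\sigma$; this requires a careful but routine bookkeeping argument showing that the ordering and non-merging constraints on a single sender's beeps never prevent the adversary from realizing any ``stretched'' version of the send-pattern as an arrival-pattern, so that the first $T$ arrival-slots can be populated to match $\sigma$ exactly (delivering spurious extra beeps only after round $T$, or withholding them indefinitely). Once that scheduling lemma is in hand, the indistinguishability conclusion and the resulting contradiction are immediate.
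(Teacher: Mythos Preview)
Your approach is the same indistinguishability argument the paper uses, and the overall plan is sound, but there is a concrete gap in the ``truncation'' case that you wave away. You fix $m_0$ as the reference execution and then, for $m_1$, try to reproduce the received pattern $\sigma$. You correctly note this works when $u$ sends at least as many beeps under $m_1$ as there are beep-arrivals in $\sigma$; but your claim that ``if it sends fewer, the adversary matches a truncation and has even more freedom \dots\ either way \dots\ $v$'s first $T$ rounds of input under $m_1$ are identical to those under $m_0$'' is simply false. A truncation of $\sigma$ is not identical to $\sigma$: if $\sigma$ has a beep in round $p$ that the truncated pattern omits, then $v$'s history at round $T$ differs, and nothing forces $v$ to output at round $T$ in the $m_1$-run. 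So the contradiction does not go through in that case.

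The paper closes exactly this gap by choosing the reference execution asymmetrically: it lets $k_1$ and $k_2$ be the numbers of beeps the source sends for the two messages, takes without loss of generality $k_1\le k_2$, and uses the \emph{smaller}-beep message as the reference. Then for the other message the adversary always has enough beeps to reproduce the reference pattern during the first $s$ rounds and can park the surplus beeps after round $s$. Your argument becomes correct once you make this swap; the ``truncation'' branch should be replaced by ``in that case, interchange the roles of $m_0$ and $m_1$''. Also, running the reference execution with delay $0$ is an unnecessary complication (it entangles $u$'s behaviour with $v$'s feedback); the paper instead has the adversary deliver all beeps for the reference message in consecutive rounds $r,r+1,\dots$ starting from some late round $r$, which sidesteps the timing bookkeeping you flag as the ``main obstacle''.
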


\begin{proof}
Consider two source messages, $m_1$ and $m_2$, that have to be transmitted from one node to the other in the two-node graph. Suppose that the source sends $k_1$ beeps for message $m_1$ and $k_2$ beeps for message $m_2$, where $k_1 \leq k_2$, without loss of generality. The adversary delivers the beeps for message $m_1$ in consecutive rounds $r,r+1,\dots,r+k_1-1$. Suppose that $s$ is the round in which the receiving node correctly decodes message $m_1$. Then, for message $m_2$, the adversary delivers $k_1$ beeps in rounds $r,r+1,\dots,r+k_1-1$, and the remaining $k_2-k_1$ beeps in rounds
$t+1,\dots, t+k_2-k_1$, where $t=\max(s,r+k_1-1)$. However, in round $s$, the receiving node has exactly the same information as for message $m_1$, and hence it incorrectly outputs the message as $m_1$. 
\end{proof}

In the rest of the paper we use the asynchronous bivalent beeping model, described in the introduction.

\section{Anonymous networks}

In this section we show that, if nodes do not have labels, then broadcasting is impossible, even for very simple graphs, and even when nodes know the topology of the network.

\begin{proposition}
Broadcasting for anonymous networks is impossible even in the cycle of size 4.
\end{proposition}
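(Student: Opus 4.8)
The plan is to combine the structural symmetry of $C_4$ with the asynchrony‑based obstruction already exploited in Proposition 2.1. Write the four nodes of $C_4$ as $v_0,v_1,v_2,v_3$ in cyclic order, with $v_0$ the source; since the network is anonymous, $v_1,v_2,v_3$ all start in the same state and run the same deterministic code, while $v_0$ runs a (message‑dependent) source code. The first point is that $v_1$ and $v_3$ are \emph{twins}: each has neighbourhood exactly $\{v_0,v_2\}$. Because the model forces the delivery delay of every beep to be the same at all neighbours of its sender, in every round $v_1$ and $v_3$ receive exactly the same beeps (those from $v_0$ and those from $v_2$); starting from identical states and running identical code, they therefore stay in identical states forever and send identical beeps in every round. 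I would prove this by a straightforward induction on rounds.

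Next I would pin down what $v_2$ can ever hear. Choose the adversary so that, for every round $r$, the beep sent by $v_1$ in round $r$ and the beep sent by $v_3$ in round $r$ are delivered to $v_2$ with the \emph{same} delay (a legal choice). Then these two beeps reach $v_2$ simultaneously, so by the reception rule $v_2$ hears them as a loud beep; and $v_2$ is not adjacent to $v_0$, so it hears nothing else. Hence $v_2$ never hears a soft beep: in every round it hears either silence or a loud beep, i.e.\ the ``channel into $v_2$'' carries only uniform‑strength signals — exactly the setting of Proposition 2.1. In addition I would have the adversary postpone every beep that $v_2$ sends to $v_1$ and $v_3$ until after $v_2$ has produced its output; this decouples the feedback loop, so the rounds in which the twins beep — and hence how many loud beeps can be delivered to $v_2$ — depend only on the source message and on the adversary's delivery schedule for the remaining beeps, not on $v_2$'s behaviour.

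With this setup the argument of Proposition 2.1 goes through almost verbatim. Pick two distinct messages $m_1\neq m_2$, named so that the number of loud beeps the twins can supply to $v_2$ is no larger for $m_1$ than for $m_2$ (extending the count by $\infty$ if the twins beep infinitely often). Run the algorithm on $m_1$, letting the adversary deliver the twins' beeps to $v_2$ one at a time, slowly, and never earlier than it would be feasible to place the corresponding beep in the $m_2$‑execution; let $v_2$ output $m_1$ at some round $s$ after hearing loud beeps in rounds $r_1<\dots<r_\ell\le s$ (and silence otherwise), where $\ell$ is at most the count for $m_1$, hence at most the count for $m_2$. Now run the algorithm on $m_2$, delivering the first $\ell$ twin‑beeps to $v_2$ in exactly the rounds $r_1,\dots,r_\ell$ (feasible by the choice above together with order‑preservation), postponing the remaining twin‑beeps and all of $v_2$'s outgoing beeps past round $s$. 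Then $v_2$'s entire reception up to round $s$ is identical to the $m_1$‑execution, so it behaves identically and outputs $m_1\neq m_2$ at round $s$, contradicting correctness.

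The twin symmetry and the ``$v_2$ hears only loud beeps'' reduction are routine; the delicate part — and the real content beyond Proposition 2.1 — is the book‑keeping in the last paragraph. One must check that the adversary can genuinely decouple the $v_2$‑to‑twins feedback (``long enough'' delays suffice, since $v_2$ decides at a finite round), and that it can line up the two executions' loud‑beep arrival times at $v_2$ while respecting non‑negativity of delays, the order‑preservation rule, and the requirement that $v_2$ be silent in each delivery round. This is where I expect essentially all the effort to lie.
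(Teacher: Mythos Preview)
Your proposal is correct and follows essentially the same approach as the paper: exploit the twin symmetry of $v_1$ and $v_3$ so that $v_2$ can be made to hear only loud beeps, then replay the two-message indistinguishability argument of Proposition~2.1 to force $v_2$ to output wrongly. The paper's proof is less cautious than yours about the feedback-decoupling and delivery-timing book-keeping you flag at the end --- it simply schedules the loud beeps in consecutive rounds $r,\dots,r+k_1-1$ in both executions and does not explicitly postpone $v_2$'s outgoing beeps --- so your extra care is warranted, but the core argument is identical.
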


\begin{proof}
Consider the anonymous cycle of size 4, and consider a hypothetical broadcasting algorithm $A$. For convenience, we label nodes $a,b,c,d$, in clockwise order. This is for the negative argument
only: nodes do not have access to these labels. Suppose that node $a$ is the source. 
Notice that, in any execution of algorithm $A$, nodes $b$ and $d$ send exactly the same beeps in the same rounds, as in each round they have the same history: indeed,
they receive the same beeps in the same rounds, they are identical, and execute the same deterministic algorithm. Let  $m_1$ and $m_2$ be two different messages that have to be broadcast by the source. 
Consider two executions of the algorithm $A$: execution  $E_1$, in which the source broadcasts message $m_1$, and execution  $E_2$, in which the source broadcasts message $m_2$.
Let $s_1$ be the sequence of beeps (soft or loud) sent by $b$ and $d$ in execution $E_1$ and let  $s_2$ be the sequence of beeps sent by $b$ and $d$ in execution $E_2$. Let $k_1$ be the length of $s_1$,
and let $k_2$ be the length of $s_2$,  where $k_1 \leq k_2$, without loss of generality. In both executions, the adversary delivers consecutive beeps from $b$ and from $d$ in the same rounds.
As a result, node $c$ hears only loud beeps: $k_1$ of them in execution $E_1$, and $k_2$ of them in execution $E_2$. The choice of the rounds of delivery of bits from $b$ and $d$ is as follows.
In execution $E_1$ these are consecutive rounds $r,r+1,\dots,r+k_1-1$, starting from some round $r$. Suppose that $s$ is the round in which  node $c$ correctly outputs message $m_1$.
Then, in execution $E_2$, the adversary delivers the first $k_1$ beeps from $b$ and $d$ in rounds $r,r+1,\dots,r+k_1-1$, and the remaining $k_2-k_1$ beeps in rounds
$t+1,\dots, t+k_2-k_1$, where $t=\max(s,,r+k_1-1)$. In round $s$, node $c$ has the same history in executions $E_1$ and $E_2$: it heard a loud beep in the same rounds, in both these executions.
Hence, in execution $E_2$, it incorrectly outputs the message  $m_1$ in round $s$.
\end{proof}

\section{Ad-hoc networks}

In this section we show that providing nodes with distinct labels makes broadcasting possible in arbitrary graphs, even if nodes do not have any initial knowledge about the network, except their own label.
Let $\cal{N}$ denote the set of non-negative integers. Consider the function $\varphi : \cal{N} \times \cal{N} \longrightarrow  \cal{N} $ given by the formula 
$\varphi(x,y)=x+(x+y)(x+y+1)/2$. This is a bijection with the property
$\varphi(x,y) \in O((x+y)^2)$. Intuitively, this is the ``snake function'' arranging all couples of non-negative integers into one infinite sequence. 

The following algorithm is executed by an active node with label $\ell$. In the beginning, all nodes are active. 
The part {\em Receive} is executed by any node other than the source. Its result is outputting the source message. This part is skipped by the source, as it knows the message. The part {\em Send} is executed by the source at the beginning of the algorithm, and it is executed by every other node upon outputting the source message in the part {\em Receive}.
After executing the part {\em Send}, the node becomes non-active.

\vspace{3ex}
\noindent
{\bf Algorithm} {\tt Ad-hoc}

\vspace{2ex}
\noindent
Part 1. {\em Receive}\\
Wait until the number of soft beeps received is at least 1/2 of the number of loud beeps received.\\
Let $t$ be the number of loud beeps received, and let $z$ be the largest integer such that $8^z \leq t$.\\
Compute the unique couple of non-negative integers $(x,y)$, such that  $\varphi(x,y)=z$.\\
Output $y$ as the source message.

\vspace{2ex}
\noindent
Part 2. {\em Send}\\
Compute $\varphi(\ell,y)$, where $y$ is the source message.\\
Send $8^{\varphi(\ell,y)}$ loud beeps, followed by $8^{\varphi(\ell,y)}$  soft beeps. 
\hfill$\diamond$

\vspace{2ex}

The following result shows that Algorithm {\tt Ad-hoc} is correct, and estimates its cost.

\begin{theorem}
Upon completion of Algorithm {\tt Ad-hoc} in an arbitrary graph, every node correctly outputs the source message.
The cost of the algorithm is $2^{O((L+M)^2)}$.
\end{theorem}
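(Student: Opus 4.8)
The plan is to prove correctness by induction on the BFS-layer distance of a node from the source, and then to bound the cost by tracking how the exponents $\varphi(\ell,y)$ grow along paths in the graph. Throughout, the key structural fact to establish first is a \emph{decoding lemma}: if a node $v$ is silent and, among the beeps delivered to it over all rounds, it hears $t$ loud beeps and at least $t/2$ soft beeps, and moreover all the loud beeps it hears originate (after adversarial collapsing) from a single neighbor $u$ that executed Part 2 with value $\varphi(u.\ell, y)=z$, then the largest $z'$ with $8^{z'}\le t$ equals $z$. To see this, note that $u$ alone sends exactly $8^z$ loud beeps; any additional loud beeps heard by $v$ come from collisions (two or more simultaneously delivered beeps, or a loud beep from some other neighbor). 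The crux is to argue the adversary cannot inflate $t$ from $8^z$ all the way up to $8^{z+1}$ without the soft-beep count dropping below $t/2$, because every soft beep any neighbor $w$ ever sends is preceded by an equal number $8^{\varphi(w.\ell,y)}$ of loud beeps from $w$, and the ordering constraint forces those loud beeps to be delivered first. Counting: the total loud beeps ever available is $\sum_w 8^{\varphi(w.\ell,y)}$ and the total soft beeps is the same sum, so globally soft $=$ loud; collisions only ever merge beeps, reducing \emph{both} counts, and merging a soft beep into a loud-heard round costs one soft. A careful accounting shows that to push the loud count a node hears up by a factor $8$ one must suppress too many soft beeps, so the waiting condition in Part 1 guarantees $v$ only proceeds once the dominant contribution $8^z$ from its in-neighbor (the one that woke up earliest and is ``closest'' in the activation order) is isolated. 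This decoding lemma, together with the fact that $\varphi$ is a bijection, gives that $v$ recovers $(x,y)$ and outputs $y$; and $y$ is the source message by the induction hypothesis applied to whichever neighbor's Part 2 transmission $v$ decoded. The base case is the source, which knows $m$ and runs Part 2 directly.

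For termination, observe that once a node outputs and runs Part 2, it sends finitely many beeps and goes inactive; since the graph is connected and finite, and every active non-source node is eventually reached by sufficiently many beeps from an already-finished neighbor (the adversary's delays are finite), every node eventually finishes. One must check the waiting condition in Part 1 is eventually satisfied at every node: this follows because a finished neighbor $u$ sends its $8^{\varphi(u.\ell,y)}$ loud beeps and then an equal number of soft beeps, all delivered after finite delay, so eventually the soft count at $v$ reaches half the loud count.

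For the cost bound: the message is some $m\in\{0,\dots,M-1\}$ and each label is in $\{0,\dots,L-1\}$, so $\varphi(\ell, m) \in O((L+M)^2)$ for every node. But the value $y$ that a node decodes and re-broadcasts is always the true source message $m$ (by correctness), \emph{not} an accumulated quantity — so every node runs Part 2 with exponent $\varphi(\ell,m) = O((L+M)^2)$, sending $2\cdot 8^{\varphi(\ell,m)} = 2^{O((L+M)^2)}$ beeps. Summing over the at most $L$ nodes multiplies by a polynomial factor, which is absorbed into the exponent, giving total cost $2^{O((L+M)^2)}$.

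The main obstacle is the decoding lemma — specifically, ruling out that the asynchronous adversary, by cleverly overlapping the loud beeps of many different neighbors while suppressing soft beeps, can make some node's loud-count cross a threshold $8^{z+1}$ while still satisfying the ``soft $\ge$ loud$/2$'' gate with the wrong $z$. The factor-$8$ gap (rather than, say, factor $2$) and the soft-beep handshake are precisely what make this work, and the heart of the proof is the combinatorial counting argument showing the adversary is trapped: isolating the correct exponent requires showing that whenever the gate opens at $v$, the loud beeps heard so far lie in $[8^{z},8^{z+1})$ for the correct $z$.
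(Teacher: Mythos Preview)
Your high-level intuition and your cost-bound argument are fine, but the correctness argument has two genuine gaps.

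\textbf{Gap 1: the induction scheme and the decoding lemma are misaimed.} BFS-layer induction does not work cleanly here: a node $v$ at BFS level $k$ can have neighbors at levels $k-1$, $k$, and $k+1$, and any of them might send beeps before $v$ outputs. The paper instead takes the \emph{first round} $r$ at which any node outputs incorrectly; then every neighbor of that node whose beeps arrived by round $r$ must have outputted correctly (since they outputted earlier). More seriously, your decoding lemma assumes the loud beeps heard by $v$ come from a \emph{single} neighbor, which is generally false, and you identify the recovered exponent $z$ with ``the neighbor that woke up earliest''. That is not the right $z$. The correct identification is: at the moment the gate opens, let $u_i$ be the neighbor with the \emph{largest label} among those from which $v$ has already received at least one soft beep, and set $z=\varphi(\ell_i,m)$. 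Then (i) all $8^{z}$ loud beeps of $u_i$ have already arrived (ordering constraint), so $t\ge 8^{z}$; and (ii) every soft beep heard so far came from some $u_j$ with $j\le i$, so the soft count is at most $\sum_{j\le i}8^{\varphi(\ell_j,m)}<\tfrac{8}{7}\,8^{z}$ by the geometric sum (here one uses that $\varphi(\cdot,m)$ is strictly increasing). If $t\ge 8^{z+1}$ then soft $\ge t/2\ge 4\cdot 8^{z}$, contradicting the $\tfrac{8}{7}\,8^{z}$ bound. Your ``careful accounting'' paragraph gestures at this but never names the right $z$ or invokes the geometric-series bound, which is the actual mechanism.

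\textbf{Gap 2: termination is not just ``one finished neighbor suffices''.} Your argument that a single finished neighbor $u$ eventually makes soft $\ge$ loud$/2$ ignores that the adversary may collide other neighbors' beeps with $u$'s soft beeps, converting arbitrarily many of them into loud beeps. The paper closes this by again looking at the neighbor $v_s$ of $v$ with the largest label among all neighbors that ever beep: the combined beeps from all other such neighbors total less than $\tfrac{2}{7}\,8^{\varphi(\lambda_s,m)}$, so even if every one of them collides with a soft beep of $v_s$, at least $\tfrac{5}{7}\,8^{\varphi(\lambda_s,m)}$ soft beeps survive while the loud count stays below $\tfrac{9}{7}\,8^{\varphi(\lambda_s,m)}$, and the gate opens. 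Without this quantitative comparison your termination claim does not go through.
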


\begin{proof}
The proof of correctness is split into two parts. We first show that no node outputs the source message incorrectly, and then we prove that every node outputs the source message in finite time.
Let $m$ be the source message. The first part of the proof is by contradiction. Suppose that some node outputs the source message incorrectly,  let $r$ be the first round when this happens, and let $u$
be a node with label $\ell$, incorrectly outputting the source message in round $r$. 
Let $u_1, \dots, u_k$ be the nodes adjacent to $u$ whose at least one beep is delivered by round $r$, ordered in increasing order of their labels $\ell_1,\dots,\ell_k$.
Since $u$  outputs the source message in round $r$, the set of nodes $\{u_1, \dots, u_k\}$ is non-empty. Moreover, all nodes $u_1, \dots, u_k$ must have outputted the source message before round $r$ (because they already sent some beeps by round $r$), and hence they outputted it correctly. Let $1\leq i \leq k$ be the largest integer $j$, such that at least one soft beep of node $u_j$ was delivered by round $r$. 

Suppose that $t$ was the number of loud beeps heard by $u$ by the round $r$. Since $u$ outputted the source message incorrectly, the largest integer $z'$,
such that $8^{z'} \leq t$, cannot be equal to $z=\varphi(\ell_i,m)$. (If it were, node $u$ would correctly compute the source message $m$ because $\varphi$ is a bijection.)
The integer $z'$ cannot be smaller than $z$ because node $u$ heard at least $8^z$ loud beeps sent by node $u_i$. Hence $z'\geq z+1$. This implies that node $u$ must have heard at least $8^{z+1}$
loud beeps by round $r$. How many soft beeps could it hear by round $r$? All these beeps could come only from nodes $u_1,\dots,u_i$. The total number of soft beeps sent by these nodes is
$\sum_{j=1}^i 8^{\varphi(\ell_j,m)}$. Since $\varphi(\ell_1,m)<\varphi(\ell_2,m)< \cdots <\varphi(\ell_i,m)$, we have $\sum_{j=1}^i 8^{\varphi(\ell_j,m)}<\frac{8}{7} \cdot 8^{\varphi(\ell_i,m)}=\frac{8}{7}\cdot 8^z$.
On the other hand, the number of soft beeps heard by node $u$ by round $r$ must be at least 1/2 of the number of loud beeps it heard by round $r$. This implies $\frac{8}{7}\cdot 8^z \geq \frac{1}{2}\cdot 8^{z+1}$,
which is a contradiction. This completes the first part of the proof.

We now prove that every node outputs the source message in finite time. This part of the proof is also by contradiction. Suppose that some node never outputs the source message. 
Since the source itself knows the source message, and the graph is connected, there must exist adjacent nodes $u$ and $v$, such that $u$ outputs the source message in finite time and 
$v$ does not. Let $v_1, \dots, v_s$ be the nodes adjacent to $v$ that ever send at least one beep, ordered in increasing order of their labels $\lambda_1,\dots,\lambda_s$.
The set of nodes $\{v_1, \dots, v_s\}$ is non-empty.
We show that, at some point, the number of soft beeps heard by $v$ is at least 1/2 of the number of loud beeps heard by $v$. 
Indeed, assume that this did not happen before all beeps of all nodes
$v_1, \dots, v_s$ are delivered. The number of all beeps sent by nodes $v_1, \dots, v_{s-1}$ is  $2\cdot \sum_{j=1}^{s-1} 8^{\varphi(\lambda_j,m)}$.
Since $\varphi(\lambda_1,m)<\varphi(\lambda_2,m)< \cdots <\varphi(\lambda_{s},m)$, we have $2\cdot \sum_{j=1}^{s-1} 8^{\varphi(\lambda_j,m)}<\frac{2}{7}\cdot 8^{\varphi(\lambda_s,m)}$.
In the worst case, these beeps can be delivered by the adversary simultaneously with the same number (fewer than $\frac{2}{7}\cdot 8^{\varphi(\lambda_s,m)}$) of soft beeps sent by node $v_s$, thus producing loud beeps heard by node $v$. This would decrease the number of soft beeps heard by $v$ and increase the number of loud beeps heard by this node, but the change cannot be too big. Indeed, 
this gives fewer than $\frac{9}{7}\cdot 8^{\varphi(\lambda_s,m)}$ loud beeps heard by $v$. On the other hand, node $v$ hears at least  $\frac{5}{7}\cdot 4^{\varphi(\lambda_s,m)}$ soft beeps sent by $v_s$ and left intact (not delivered simultaneously with other beeps) by the adversary. Hence the number of soft beeps heard by node $v$ is at least 1/2 of the number of loud beeps heard by it. It follows that node $v$ outputs the source message, contrary to our assumption.

This completes the proof of correctness of Algorithm {\tt Ad-hoc}. We now estimate its cost. A node with label $\ell$ sends $2\cdot 8^{\varphi(\ell,m)}$ beeps, where $m$ is the source message.
Hence the cost of Algorithm {\tt Ad-hoc} in an $n$-node network is at most $2n\cdot 8^{\varphi(L,M)}$. Since $\varphi(L,M) \in O((L+M)^2)$, and $\varphi(L,M) \geq L \geq n$, this gives the cost
$2^{O((L+M)^2)}$.
\end{proof}

%
%
%

{\bf Remark.} Notice that, if nodes know one of the parameters, either $L$ or $M$, then the bijection $\varphi$ can be replaced by a more efficient one-to-one function from 
the product $\{0,\dots , L-1\}\times \{0,\dots , M-1\}$ to non-negative integers. For example, if $L$ is known, then this function can be $\psi(\ell,m)=mL+\ell$, and if 
$M$ is known, then this function can be $\psi '(\ell,m)=\ell M+m$. The values of these functions are in $O(LM)$, and hence, if we substitute one of them for $\varphi$,
the cost of the algorithm becomes $2^{O(LM)}$.

As we have seen above, the cost of Algorithm {\tt Ad-hoc} is very large: even with knowledge of $L$ or $M$, it is exponential in the product of these parameters.
Hence, it is natural to ask if there are broadcasting algorithms, for ad-hoc networks,
with cost polynomial in $L$ and $M$. Our next result shows that the answer is negative. Before proving it we recall a notion and a fact from  \cite{CFP}.

A set $S$ of positive integers is \emph{dominated} if, for any finite subset $T$
of
$S$, there exists $t\in T$ such that $t$ is larger than the sum of all
$t' \ne t $ in $T$.

\begin{lemma}\label{lemma:2^k}
Let $S$ be a finite dominated set and let $k$ be its size. Then there exists $x
\in S$
such that $x\ge 2^{k-1}$.
\end{lemma}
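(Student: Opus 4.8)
The plan is to order the elements of $S$ and show, by induction on the position, that the $j$-th smallest element is at least $2^{j-1}$; applying this with $j=k$ gives the desired $x$. The key structural observation is that in a dominated set the "dominating" element of any finite subset $T$ must be the \emph{largest} element of $T$: if $t\in T$ exceeds the sum of all other elements of $T$, then since all elements are positive integers, $t$ strictly exceeds every other element of $T$, so $t=\max T$.

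\textbf{Steps.} First I would write $S=\{s_1,\dots,s_k\}$ with $s_1<s_2<\cdots<s_k$. Next, for each $j$ with $2\le j\le k$, I apply the domination hypothesis to the subset $T_j=\{s_1,\dots,s_j\}$; by the observation above the dominating element is $s_j$, hence
\[
s_j \;>\; s_1+s_2+\cdots+s_{j-1}.
\]
Then I run the induction: the base case is $s_1\ge 1 = 2^0$ (since $S$ consists of positive integers). For the inductive step, assuming $s_i\ge 2^{i-1}$ for all $i<j$, I sum to get $s_1+\cdots+s_{j-1}\ge \sum_{i=1}^{j-1}2^{i-1}=2^{j-1}-1$, so $s_j>2^{j-1}-1$, and since $s_j$ is an integer, $s_j\ge 2^{j-1}$. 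Finally, taking $x=s_k$ yields $x\ge 2^{k-1}$, which proves the lemma.

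\textbf{Main obstacle.} There is no real difficulty here; the only point requiring a moment's care is the justification that the dominating element of a finite subset of positive integers is necessarily its maximum (so that the inequality $s_j > \sum_{i<j}s_i$ is legitimate rather than some inequality involving an arbitrary element), and the bookkeeping that the geometric sum $\sum_{i=1}^{j-1}2^{i-1}$ equals $2^{j-1}-1$, which combined with integrality upgrades the strict inequality to $s_j\ge 2^{j-1}$.
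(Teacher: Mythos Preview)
Your argument is correct. Note that the paper does not actually give its own proof of this lemma: it is merely \emph{recalled} as a known fact from~\cite{CFP}, so there is nothing to compare your proof against. Your inductive argument (ordering $S$ as $s_1<\cdots<s_k$, observing that the dominating element of $\{s_1,\dots,s_j\}$ must be $s_j$, and then using $s_j>\sum_{i<j}s_i\ge 2^{j-1}-1$) is the standard proof of this fact and is perfectly sound.
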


\begin{theorem}
For arbitrary integers $L \geq 4$, there exist $L$-node ad-hoc networks, for which
the cost of every broadcasting algorithm is $\Omega(2^L)$.
\end{theorem}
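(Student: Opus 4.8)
The plan is to exhibit, for each $L\ge 4$, a single $L$-node ad-hoc network with one distinguished \emph{sink} that can only learn the message through $L-2$ independent \emph{relays}, and to argue that correctness forces the numbers of beeps emitted by these relays to form a dominated set; Lemma~\ref{lemma:2^k} then extracts a relay that sends $2^{\Omega(L)}$ beeps. Concretely, let $G$ be the graph on vertex set $\{0,\dots,L-1\}$ whose edges are $\{0,j\}$ and $\{j,L-1\}$ for every $j\in\{1,\dots,L-2\}$: the source is node $0$, the sink is node $L-1$, and the relays $v_1,\dots,v_{L-2}$ are pairwise non-adjacent, each adjacent to both source and sink. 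This graph is connected and has exactly $L$ nodes, which is why we need $L\ge 4$ (so that there are at least two relays). We may assume $M\ge 2$; fix two distinct messages $m_1,m_2$. Since any beep carries one adversarial delay that is common to all neighbours of its sender, the adversary can deliver the source's whole transmission to the relays first, so that, on message $m_b$, every relay $v_j$ decodes $m_b$ in the same round and then emits a finite sequence $\sigma_j(m_b)$ of soft/loud beeps depending only on the algorithm, on $j$ and on $m_b$. (If some relay, or the sink, reacts to later deliveries by beeping indefinitely, the total cost is already unbounded, so we may assume the sink stays silent until it outputs and each relay's contribution is exactly this finite sequence.)

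The core claim is: if a candidate broadcasting algorithm $A$ is correct on $G$, then the integers $c_j:=|\sigma_j(m_1)|+|\sigma_j(m_2)|$, for $1\le j\le L-2$, form a dominated set. Suppose not; then some subset $T$ of the relays violates the domination condition, i.e.\ the largest of the corresponding $c_j$ does not exceed the sum of the others. The adversary runs $A$ on $G$ twice, once with source message $m_1$ and once with $m_2$, and in each run it holds back the beeps of the relays outside $T$ until long after the sink has output (legal, since the execution on $m_1$ is finite), so that the sink's history is built only from beeps of relays in $T$. Because a round carrying one loud beep, or two or more beeps of any kind, is heard as loud, the adversary can interleave and coalesce the sequences $\{\sigma_j(m_b):j\in T\}$ with great freedom: it may overwrite soft ``confirmation'' beeps of one relay by delivering them together with beeps of another, and it may pad the shorter of the two runs using the surplus of beeps guaranteed by the domination failure for $T$. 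What has to be proved here is that this freedom suffices to produce, round by round, an identical sink-history in the $m_1$-run and the $m_2$-run, up to and including the round in which, in the $m_1$-run, the sink must commit to $m_1$; then on $m_2$ the sink commits to $m_1$ as well, contradicting correctness. This is exactly the confusion scheme of the two earlier impossibility arguments, lifted from one or two ``copies'' of a sender to $|T|$ relays whose beep counts fail to dominate. Granting the claim, Lemma~\ref{lemma:2^k} applied to the dominated set $\{c_1,\dots,c_{L-2}\}$ of size $k=L-2$ yields a relay $v_j$ with $c_j\ge 2^{L-3}$, so $v_j$ sends at least $2^{L-4}$ beeps on one of $m_1,m_2$, and the cost of $A$ on $G$ for that message is $\Omega(2^L)$.

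\textbf{Main obstacle.} The delicate point is the merging construction inside the core claim: one must show that whenever the per-relay beep counts in a subset $T$ fail the domination inequality, the asynchronous adversary really can align the $m_1$-run and the $m_2$-run at the sink. This requires (i) neutralising, by coalescing beeps across relays, whatever ``end-of-packet'' pattern of soft beeps the sink's decoding rule relies on -- the bivalent model grants exactly this power, since any aggregation is heard as loud; (ii) using the guaranteed surplus of beeps to equalise the two histories while respecting that the beeps of a single relay must be delivered in order and in distinct rounds, with no collapsing; and (iii) handling silent rounds, since in principle the sink could read information from gaps (the adversary can simply deliver everything back-to-back, but this must be reconciled with (i)--(ii)). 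Everything else is routine: $G$ is visibly a legal $L$-node ad-hoc network, the reduction of reactive behaviour to the trivial unbounded-cost case is immediate, and the final counting is a one-line application of Lemma~\ref{lemma:2^k}.
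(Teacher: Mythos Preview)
Your construction (source $0$, sink $L-1$, independent relays $1,\dots,L-2$) and your plan to invoke Lemma~\ref{lemma:2^k} on a dominated set of relay beep-counts are exactly the paper's strategy, and the device of holding back the beeps of relays outside the chosen subset is equivalent to the paper's move of passing to the subgraph $G_T$. The structural skeleton is right.

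The genuine gap is your choice of the quantities $c_j=|\sigma_j(m_1)|+|\sigma_j(m_2)|$ and, tied to it, your insistence on a \emph{single} witness set $T$ serving both messages. Non-domination of $\{c_j\}_{j\in T}$ does \emph{not} let the adversary align the two runs at the sink. The obstruction is exactly the one you flag under ``main obstacle~(i)'': to turn every delivery round into a loud beep you must pair each beep of the most prolific relay with a beep of some other relay, and this needs $\max_{j\in T}|\sigma_j(m_b)|\le\sum_{j'\in T\setminus\{j\}}|\sigma_{j'}(m_b)|$ \emph{for each $b$ separately}. A failure of domination for the sums $c_j$ gives no such per-message inequality. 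Concretely, take $T=\{1,2\}$ with $|\sigma_1(m_1)|=N,\ |\sigma_1(m_2)|=0,\ |\sigma_2(m_1)|=0,\ |\sigma_2(m_2)|=N$; then $c_1=c_2=N$ is non-dominated, yet in the $m_1$-run the sink necessarily receives $\sigma_1(m_1)$ verbatim (nothing to coalesce it with) and in the $m_2$-run it receives $\sigma_2(m_2)$ verbatim. If these two sequences differ in their soft/loud pattern, the sink can distinguish them, and your confusion step collapses. The ``surplus'' you invoke in~(ii) is a surplus in the sum $c_j$, which the adversary cannot transfer between runs.

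The paper avoids this by working per message: it shows that if the set $\{B(\ell)\}$ of beep counts for message $m$ is not dominated, one finds a subset $T$ on which the adversary can make the sink hear \emph{only loud beeps} in the $m$-run (precisely because per-message non-domination gives the pairing inequality above), and likewise a possibly different subset $T'$ for $m'$. With both histories reduced to strings of loud beeps, only their lengths matter, and the standard prefix trick yields the confusion. Since the sink knows only its own label in the ad-hoc model, it is irrelevant that $T\ne T'$; equivalently, in your single-graph formulation, the adversary simply delays different sets of relays in the two runs. If you replace $c_j$ by the per-message counts and allow $T\ne T'$, your outline becomes a complete proof, and it is then essentially the paper's.
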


\begin{proof}

Let $A$ be any broadcasting algorithm. 
For any set $S\subseteq \{1,\dots, L-2\}$, of size at least 2, the graph $G_S$ is defined as follows.
$G_S$ has $|S|+2$ nodes with labels from the set $S \cup \{0,L-1\}$.
Each of the nodes with labels in $S$ is adjacent to each of the nodes with labels $0$ and $L-1$, 
and there are no other edges in the graph. The node with label $0$ is the source, and the node with label $L-1$ is called the {\em sink}.


We will consider executions of algorithm $A$ in graphs $G_S$, in which the adversary obeys the following rules concerning the delivery of beeps sent by the source and the sink:
\begin{enumerate}
\item
All beeps sent by the source after it heard some beep, are delivered after the round when the sink outputs the source message.
\item
All beeps sent by the sink are delivered after the round in which the sink outputs the source message.
\end{enumerate}

Since the considered networks are ad-hoc, i.e., a priori, every node knows only its own label, and the adversary obeys the above rules, the number of beeps sent  by a node with a given label $\ell \in \{1,\dots, L-2\}$ by the round in which the sink outputs the source message, depends only on this label and on the source message, and not on the graph $G_S$ in which the algorithm is executed. Indeed, the history of a node with label $\ell \in \{1,\dots, L-2\}$, by the round in which the sink
outputs the source message, is the same in all graphs $G_S$, for a given source message $m$.

Consider the execution of algorithm $A$ in the graph $G_{\{1,\dots, L-2\}}$, for a fixed source message $m$. Let $B(\ell)$, for $\ell \in \{1,\dots, L-2\}$,
be the number of beeps of both kinds, that the node with label $\ell$ sends by the round in which the sink outputs the source message. If the set  of integers $I=\{B(\ell): \ell \in \{1,\dots, L-2\}\}$ is dominated, then by
Lemma \ref{lemma:2^k}, some integer in this set is at least $2^{L-3}$, and we are done. Otherwise, there exists a subset $T\subseteq  \{1,\dots, L-2\}$,
with the following property. If $t\in T$ is such that $B(t) \geq B(t')$, for all $t' \in T \setminus \{t\}$, then $B(t)\leq \sum_{t' \in T \setminus \{t\}}B(t')$.  
Consider the execution $E$ of algorithm $A$ in the graph $G_T$, for the same source message $m$. As observed above, the number of beeps
of both kinds, that the node with label $\ell$ sends in this execution by the round in which the sink outputs the source message, 
is $B(\ell)$. The adversary delivers beeps  sent by nodes with labels from $T$, in consecutive rounds,
delivering simultaneously a beep sent by the node with label $t$ with one or more beeps sent by nodes with labels $t' \in T \setminus \{t\}$, in such a way that
in no round a single beep is delivered. This is possible due to the inequality $B(t)\leq \sum_{t' \in T \setminus \{t\}}B(t')$. Hence the sink hears only
loud beeps.

Now, consider a different source message $m'$. The same argument as above shows that,  if  the cost of the algorithm $A$ on the graph $G_{\{1,\dots, L-2\}}$
is smaller than  $2^{L-3}$, then there exists some set $T'\subseteq  \{1,\dots, L-2\}$, such that, in the execution $E'$ of the algorithm $A$ on the graph $G_{T'}$,
with the source message $m'$, the sink hears only loud beeps.

Suppose that, by the time it outputs the source message, the sink hears $k$ loud beeps in the execution $E$ and hears $k'$ loud beeps in the execution $E'$.
Without loss of generality, assume that $k \leq k'$. The choice of rounds of delivery  of these beeps by the adversary is the following.

In execution $E$, these are consecutive rounds $r,r+1,\dots,r+k-1$, starting from some round $r$. Suppose that $s$ is the round in which the sink correctly outputs message $m$.
Then, in execution $E'$, the adversary first delivers beeps in rounds $r,r+1,\dots,r+k-1$, and the remaining $k'-k$  rounds of beep delivery are
$z+1,\dots, z+k'-k$, where $z=\max(s,,r+k-1)$. In round $s$, the sink has the same history in executions $E$ and $E'$: it heard only loud beeps, and this happened in the same rounds, in both these executions.
Hence, in execution $E'$, it incorrectly outputs the message $m$ in round $s$.

The obtained contradiction comes from assuming that the cost of algorithm $A$ on the graph $G_{\{1,\dots, L-2\}}$ is smaller than $2^{L-3}$, for all source messages. 
This completes the proof.
\end{proof}

\section{Neighborhood-aware networks}

In this section we assume that all  nodes have distinct labels, and that each of them knows its own label and the labels of all its neighbors. This seemingly small increase of knowledge,
compared to ad-hoc networks
(the knowledge of every node is still local) turns out to decrease the cost of broadcasting in a dramatic way. In order to guarantee a low cost of broadcasting, 
we have to encode messages by sequences of beeps very efficiently. The algorithm uses messages of two types: non-negative integers and triples of non-negative integers. These messages have to be encoded by strings of beeps of length logarithmic
in the values of these integers, in such a way that the recipient knows when the string starts and  ends, and can unambiguously decode the message from the string. However, as opposed to Algorithm {\tt Ad-hoc} in which nodes sent exponentially many beeps, such efficient encoding is very vulnerable to possible actions of the adversary that can  arbitrarily interleave delivered beeps coming from different neighbors of a node. In order to avoid this, we design our algorithm in such a way that beeps encoding a message sent by some node are delivered before any other node starts sending its own beeps. In this way, the danger of interleaving beeps is avoided. 

Before presenting the algorithm, we define the encoding of integers and of their triples, announced above. We denote a loud beep by $l$, a soft beep by $s$, and we use the symbol $\cdot$ for the concatenation of sequences of beeps. Let $k$ be a non-negative integer, and let $(c_1,\dots,c_r)$ be its binary representation. Denote by $S(k)$
the sequence of $2r$ beeps resulting from $(c_1,\dots,c_r)$ by replacing every bit $c_i=0$ by $(ls)$ and by replacing every bit $c_j=1$ by $(sl)$. The code of an integer
$k$, denoted by $[k]$, is the sequence $(ll)\cdot S(k) \cdot (ll)$. The code of a triple $(a,b,c)$ of integers, denoted by $[a,b,c]$, is the sequence
$(ll)\cdot S(a) \cdot (ss)\cdot S(b)\cdot (ss)\cdot S(c) \cdot (ll)$. Note that a sequence of 2 loud beeps marks the beginning and end of a message, and 
all messages contain an even number of beeps, logarithmic in the integers transmitted. A node at the receiving end can determine the beginning of the message 
as a sequence $\sigma$ of 2 consecutive loud beeps, and the end of the message as the first sequence $\sigma'$ of 2 consecutive loud beeps starting after the end of $\sigma$ at an odd position, where the first bit of the sequence $\sigma$ is at position 1. In order to decode the content of the message $(ll)\cdot \alpha \cdot (ll)$, with the beginning and end already correctly identified, a node looks for separators $(ss)$ starting at odd positions of $\alpha$. There are either 0 or 2 such separators. In the first case, the transmitted message was an integer, and the node decodes its binary representation by replacing each couple $(ls)$ by 0 and each couple $(sl)$ by 1. In the second case, the node
can unambiguously represent $\alpha$ as $\alpha_1 \cdot (ss)\cdot \alpha_2 \cdot (ss) \cdot \alpha _3$, where each $\alpha_i$ has even length, and decode $\alpha_1,\alpha_2, \alpha_3$ as above. 

Using the above encoding, we are now able to describe our broadcasting algorithm.
At a high level, it is organized as a depth-first traversal of the graph, starting from the source. 
We will use the instructions ``send $[a]$'' and ``send $[a,b,c]$''  that are procedures sending the above described sequences of beeps, in consecutive rounds.
A message $[a]$, where $a \in \{0,1,\dots M-1\}$, is always the source message to be broadcast. There are two kinds of messages of type ``triple of integers'':
For $a,b \in \{0,1,\dots L-1\}$, a  message of the form $[a,b,0]$ corresponds to a forward DFS edge traversal from the node with label $a$ to a node with label $b$,
and  a  message of the form $[a,b,1]$ corresponds to a backward DFS edge traversal from the node with label $a$ to a node with label $b$.
 
 The algorithm is executed by a node with label $\ell$. The actions of the node alternate between executing ``send'' instructions and listening. The algorithm is organized in such a way that
 the following {\em disjointness property} is satisfied.  Consider a node $u$ executing some send instruction $I(u)$. Let $\sigma(u)$ be the segment of consecutive rounds between the sending of the first
 beep of instruction $I(u)$ and the delivery of the last bit of this instruction. Then, for any two nodes $u$ and $v$, executing any send instructions $I(u)$ and $I(v)$, the segments of rounds $\sigma(u)$ and $\sigma(v)$
 are disjoint. This property permits to identify circulating messages as distinct ``packets'', and use them to implement a DFS traversal.
  
 When the node listens, it watches for the beginning and end of a message formed by the delivered beeps. When it detects a complete message, it reacts to it in one of two ways: it either keeps listening and watches for another complete message, or it reacts by executing some ``send'' instruction. More specifically, the actions of the node with label $\ell$,
 other than the source, are as follows. 
 After getting the source message and the first forward DFS message $[a,\ell,0]$, addressed to it and coming from a node with label $a$, the node with label $\ell$ starts spreading the message to all its neighbors with labels $a_i$,
 except that with label $a$, by sending the decoded source message $[m]$ and sending forward DFS messages $[\ell,a_i,0]$ addressed to them, in increasing order of labels. In order
 to transit from one neighbor to the next, the node $\ell$ waits for a backward message $[a_i,\ell,1]$, addressed to it. In the meantime, node $\ell$ refuses all subsequent
 forward DFS messages $[b,\ell,0]$ , for $b\neq a$, addressed to it, responding by a backward DFS message $[\ell,b,1]$. The actions of the source are similar. 
 
 The pseudocode of the algorithm follows.
 

\vspace{3ex}
\noindent
{\bf Algorithm} {\tt Neighborhood-aware}

\vspace{2ex}
\noindent
{\bf if} the executing node is the source, and the source message is $m$  {\bf then}\\
\hspace*{1cm}$message \leftarrow m$\\
\hspace*{1cm}let $(a_1,a_2,\dots ,a_s)$ be labels of all the neighbors of the node,\\ 
\hspace*{1cm}in increasing order\\
\hspace*{1cm}{\tt Spread}$(a_1,\dots,a_s)$\\
\hspace*{1cm}whenever a message $[b,\ell,0]$, for some integer $b$, is decoded then\\
 \hspace*{2cm}send $[\ell,b,1]$\\
{\bf else}\\
\hspace*{1cm}when a message $[m]$ is decoded for the first time, then\\
\hspace*{2cm}$message \leftarrow m$\\ 
\hspace*{2cm}output $message$ as the source message\\
\hspace*{1cm}when a message $[a,\ell,0]$ is decoded for the first time, then\\
\hspace*{2cm}let $(a_1,a_2,\dots ,a_s)$ be labels of all the neighbors of the node,\\ 
\hspace*{2cm}except $a$, in increasing order\\
\hspace*{2cm}{\tt Spread}$(a_1,\dots,a_s)$\\
\hspace*{2cm}send $[\ell,a,1]$\\
\hspace*{2cm}whenever a message $[b,\ell,0]$, for some $b \neq a$, is decoded then\\
 \hspace*{3cm}send $[\ell,b,1]$
 \hfill$\diamond$
 
  \vspace{2ex}

 The procedure {\tt Spread}, used by the algorithm and executed by a node with label $\ell$, is described as follows.
 
 \vspace{2ex}
\noindent
{\bf Procedure} {\tt Spread}$(a_1,\dots,a_s)$

\vspace{2ex}
\noindent
send $[message]$\\
$i \leftarrow 1$\\
{\bf while} $i \leq s$ {\bf do}\\
\hspace*{1cm}send $[\ell,a_i,0]$\\
\hspace*{1cm}when the message $[a_i,\ell,1]$ is decoded then\\
\hspace*{2cm}$i \leftarrow i+1$
 \hfill$\diamond$

\begin{theorem}\label{aware}
Upon completion of Algorithm {\tt Neighborhood-aware} in an arbitrary $n$-node graph with $e$ edges, every node correctly decodes the source message.
The cost of the algorithm is $O(n\log M +e\log L)$.
\end{theorem}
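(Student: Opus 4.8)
The statement has two parts. For correctness I would first prove the \emph{disjointness property} stated just before the algorithm, and then read off from it both that every node decodes every message it receives, and that the circulating ``triples'' realize a depth-first traversal of $G$. Disjointness is the crux, and the main obstacle: one must establish it against a worst-case asynchronous adversary. My plan is to list all ``send'' instructions executed during a run in the order $I_1 \prec I_2 \prec \cdots$ in which their first beep is sent, and prove by induction that (a) in every round at most one node transmits, and (b) every $I_k$ is delivered in full --- simultaneously to all neighbours of its sender, by the ``same delay at all neighbours'' rule --- before the first subsequently executed instruction of a \emph{different} node begins. The mechanism is that the ``baton'' of activity passes from a node $u$ to a node $v$ only when $v$ decodes a \emph{complete} message of $u$, which forces all beeps of that message to have arrived; combined with the adversary's in-order, no-collapse delivery of a single node's beeps, this gives (b), and since every beep arrives after finite delay it also gives termination. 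One checks this for each way control moves: the length-two burst ``send $[message]$; send $[\ell,a_1,0]$'' inside {\tt Spread} (here in-order delivery makes $[message]$ arrive before the forward message, and it is the \emph{forward} message that triggers the next node, since a bare $[m]$ triggers only a local ``output''), a lone forward message issued after an acknowledgement, a backward ``to-parent'' message, and a refusal; and one checks that only the addressee of a forward message $[a,b,0]$ reacts to it --- the key point being that a node never Spreads towards its DFS parent, hence is never the addressee of a forward message from its own child.

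Granting disjointness, whenever a message reaches a node $w$ its sender holds the baton, so $w$ is silent and hears every beep, and disjointness together with in-order delivery prevents beeps of different messages from sharing a round at $w$, so $w$ sees a clean concatenation of codewords. A routine case analysis on positions and parities shows the encoding is uniquely decodable: inside $S(k)$ the blocks $(ls),(sl)$ never produce $(ll)$ or $(ss)$ at an odd offset, so the $(ll)$ framing and $(ss)$ separators the decoder detects are exactly the intended ones, and the count $0$ versus $2$ of separators correctly distinguishes integer codes from triple codes. Hence every node decodes every received message correctly. Inspecting the pseudocode, the triples then implement a standard DFS of the connected graph rooted at the source: a node runs {\tt Spread} exactly once, upon the first forward message addressed to it, emitting $[m]$ and forward messages to its non-parent neighbours in increasing label order, advancing only on receipt of the matching backward message, while an already-visited node answers a forward message immediately with a refusal. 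Standard DFS reasoning gives termination and that every node is visited, hence decodes $[m]$ and outputs the true source message $m$.

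For the cost I would simply count send instructions. {\tt Spread} runs once per node, each run emitting one code $[m]$, so there are exactly $n$ instructions ``send $[m]$'', each of length $O(\log M)$: total $O(n\log M)$. Triple messages are either forward --- one per non-parent neighbour of each node, hence $\sum_u \deg(u)-(n-1)=2e-n+1$ in all --- or backward --- one ``to-parent'' message per non-source node ($n-1$ total) plus one refusal for each forward message that is not a DFS-tree edge ($2e-2n+2$ total). So there are $O(e)$ triples, and since every coordinate of a triple is at most $\max(L-1,1)$, each has length $O(\log L)$, for a total $O(e\log L)$. Summing, the cost is $O(n\log M + e\log L)$, as claimed.
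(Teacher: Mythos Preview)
Your proposal is correct and follows essentially the same approach as the paper's proof: establish the disjointness property, deduce that messages are decoded correctly, observe that the control triples implement a DFS traversal, and count messages for the cost bound. The paper's own proof is extremely terse---it simply asserts disjointness and the DFS behaviour and then counts two control messages per edge---so your write-up is considerably more detailed; in particular, your inductive ``baton'' argument for disjointness, your check that only the intended addressee reacts to each message (including the handling of the two-instruction burst $[m];[\ell,a_1,0]$), and your more careful message count of $4e-2n+2$ triples (versus the paper's $2e$) are all refinements the paper leaves implicit, but the underlying strategy is the same.
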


\begin{proof}
In view of the disjointness property, all messages are correctly decoded by their addressees. Since the control messages $[a,b,0]$ and $[a,b,1]$ travel in a DFS fashion,
and each message $[a,b,0]$ is preceded by the source message $[m]$,  all nodes get the source message and decode it correctly. This proves the correctness of the algorithm. To estimate its cost, note that each node sends the source message $[m]$ once, and, for any pair of adjacent nodes $a$ and $b$, two control messages
among $[a,b,0]$, $[a,b,1]$, $[b,a,0]$, $[b,a,1]$ are sent. Since the source message $[m]$ consists of $O(\log M)$ beeps, and each control message consists of $O(\log L)$ beeps,
the total cost of the algorithm is $O(n\log M +e\log L)$.
\end{proof}

Before proving our lower bound on the cost of broadcasting algorithms  in neighborhood-aware networks, we prove the following two lemmas.

\begin{lemma}\label{log M}
Every broadcasting algorithm has cost $\Omega(\log M)$ in the two-node graph. 
\end{lemma}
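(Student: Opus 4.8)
The plan is to show that already in the two-node graph the source must be able to transmit $M$ pairwise distinguishable beep patterns to the other node, and that distinguishing $M$ patterns over the binary alphabet $\{\text{soft},\text{loud}\}$ forces one of them to have length $\Omega(\log M)$, hence that many beeps. So, fix any broadcasting algorithm $A$, let $u$ be the source and $v$ the other node, and for a message $m\in\{0,\dots,M-1\}$ consider the execution $\mathcal E_m$ of $A$ in which the adversary delivers each beep of $u$ one round after it is sent and delays every beep of $v$ so that it is delivered only after the round $t_m$ in which $v$ outputs. In its receive phase $v$ stays silent (it beeps only after outputting), so up to round $t_m$ node $u$ receives no beep; hence $u$'s actions, and in particular the sequence $\gamma_m\in\{\text{soft},\text{loud}\}^{*}$ of beeps that $v$ hears by round $t_m$, depend only on $m$. (Since $v$'s only neighbour is $u$, in each delivery round $v$ hears exactly one beep, read at face value, and this is consistent because $u$'s behaviour up to $t_m$ does not depend on how late $v$'s beeps arrive, as long as this is after $t_m$.) Clearly $|\gamma_m|$ is at most the number of beeps $u$ sends in $\mathcal E_m$, hence at most the cost of $A$.

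Next I would argue that for $m\neq m'$ the string $\gamma_m$ is not a prefix of $\gamma_{m'}$; in particular the $\gamma_m$ are pairwise distinct. Suppose $\gamma_m$ were a prefix of $\gamma_{m'}$, and let $s=t_m$. Run $\mathcal E_{m'}$ with the adversary delivering the beeps of $u$ in exactly the same rounds as in $\mathcal E_m$. Then at round $s$ node $v$ has received precisely the beeps of $\gamma_m$ (the first $|\gamma_m|$ beeps that $u$ sends for message $m'$ coincide with $\gamma_m$), in the same rounds, and $v$ has itself remained silent; so its entire local history at round $s$ is identical in $\mathcal E_m$ and $\mathcal E_{m'}$, and being deterministic it produces the same output $m$ in both. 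Since the true message in $\mathcal E_{m'}$ is $m'\neq m$, this contradicts correctness of $A$. This step is essentially the argument used in the proof of Proposition~2.1.

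It follows that $\{\gamma_m : 0\le m\le M-1\}$ is a family of $M$ pairwise distinct binary strings, no one a prefix of another --- i.e.\ $M$ leaves of a binary tree --- so at least one of them has length at least $\log_2 M$; even using only distinctness one gets $\max_m |\gamma_m|\ge \log_2 M - 1$, since there are fewer than $2^{K+1}$ binary strings of length at most $K$. Consequently, for the message attaining the maximum, node $u$ sends at least $\log_2 M - 1$ beeps in the corresponding execution, so the cost of $A$ is $\Omega(\log M)$. The only point requiring care --- and what I expect to be the (minor) crux rather than the counting --- is the bookkeeping that before $v$ outputs the source learns nothing and that $v$'s output depends only on the sequence of beeps it has heard, not on the delivery times: the former is secured by stalling all of $v$'s beeps past round $t_m$ together with $v$ being silent in its receive phase, and the latter by forcing $\mathcal E_m$ and $\mathcal E_{m'}$ to use identical delivery rounds, which makes $v$'s histories literally identical.
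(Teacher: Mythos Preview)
Your approach is the paper's approach: show that $v$ must be able to distinguish $M$ different beep sequences coming from $u$, then count. The paper does this in three lines by having the adversary deliver all of $u$'s beeps in consecutive rounds and invoking pigeonhole; you add a prefix-freeness argument in the style of Proposition~2.1, which is a harmless strengthening.

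There is, however, a real technical slip in your indistinguishability step. You fix $\mathcal E_m$ so that $u$'s $i$-th beep is delivered in round $r_i^{(m)}+1$, where $r_i^{(m)}$ is the round in which $u$ sends its $i$-th beep when the source message is $m$ (recall $u$ hears nothing, so these rounds are determined by $m$). You then say: ``run $\mathcal E_{m'}$ with the adversary delivering the beeps of $u$ in exactly the same rounds as in $\mathcal E_m$''. But for message $m'$ the source may send its $i$-th beep in a round $r_i^{(m')}>r_i^{(m)}+1$, and the adversary cannot deliver a beep before it is sent. So the delivery schedule of $\mathcal E_m$ need not be reproducible in $\mathcal E_{m'}$, and the ``literally identical histories'' claim does not follow from your setup. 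A separate minor point: the assertion ``in its receive phase $v$ stays silent (it beeps only after outputting)'' is unjustified for an arbitrary algorithm~$A$, and also unnecessary --- $u$ receives nothing by round $t_m$ because the adversary delays $v$'s beeps, not because $v$ is silent; $v$'s behaviour is then determined by induction from what it hears, whether or not it also beeps.

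The fix is exactly what the paper does: instead of delivering each of $u$'s beeps one round after it is sent, have the adversary deliver them in consecutive rounds $R,R+1,R+2,\dots$ for a fixed $R$ large enough that, for \emph{every} $m$, all of $u$'s beeps (under the ``$u$ hears nothing'' schedule) have been sent by round $R$. Such an $R$ exists because the algorithm has finite cost and the adversary knows $A$. With this normalization the delivery rounds no longer depend on $m$, so if $\gamma_m$ is a prefix of $\gamma_{m'}$ (or merely equal to it, which already suffices), $v$'s histories are identical up to round $R+|\gamma_m|-1$, and your contradiction goes through.
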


\begin{proof}
Suppose that there exists a broadcasting algorithm that has cost at most $\frac{1}{2}\log M$ in the two node graph, where the node $u$ is the source,
and the node $v$ is the other node of the graph.
For any source message $m$, the adversary delivers all the beeps sent by $u$ in consecutive rounds.
Since there are fewer than $M$ different binary strings of length at most $\frac{1}{2}\log M$, for two different source messages, $m_1$ and $m_2$, the strings of beeps received by $u$ must be identical. Hence the message outputted by $v$ must be identical for  $m_1$ and $m_2$, and thus it must be incorrect for one of them.
\end{proof}

\begin{lemma}\label{log L}
Every broadcasting algorithm has cost $\Omega(\log\log L)$ in some cycle of size~4. 
\end{lemma}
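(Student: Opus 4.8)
The plan is to isolate a family of $4$-cycles on which a cheap algorithm must reuse a ``local behaviour'' for two different labels, and then to show that this makes the algorithm fail exactly as in the anonymous case. Fix the label $1$ for a distinguished node and, for distinct labels $\beta,\delta\in\{2,\dots,L-1\}$, let $C_{\beta,\delta}$ be the $4$-cycle whose nodes carry, in cyclic order, the labels $0,\beta,1,\delta$, the node labelled $0$ being the source; thus the two neighbours of the source carry labels $\beta$ and $\delta$ and the node $c$ opposite the source carries label $1$. Let $A$ be an arbitrary broadcasting algorithm and let $K$ be the maximum, over all these cycles, of the cost of $A$. I will show $K=\Omega(\log\log L)$, which proves the lemma.

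The first ingredient is a lockstep argument. Suppose $A$ prescribes the \emph{same} behaviour (as a function of the sequence of beeps heard so far) to a node with label $\beta$ and neighbourhood $\{0,1\}$ as to a node with label $\delta$ and the same neighbourhood. Then in every execution of $A$ on $C_{\beta,\delta}$ the two neighbours $b$ and $d$ of the source have identical histories and emit identical beeps in every round: by induction, any beep reaching $b$ or $d$ is a beep of the source or of $c$ and the model delivers it to both of them in the same round, $b$ and $d$ hear nothing else since they are non-adjacent, and, having the same behaviour on equal histories, they always act alike. The adversary may therefore deliver the $i$-th beep of $b$ together with the $i$-th beep of $d$ (after a common delay, with these events spread out in time and placed in rounds in which $c$ is silent), so that $c$ hears only loud beeps, their number being the length of $b$'s beep sequence, which depends only on the source message. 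Now the situation is exactly the one used in the impossibility proof for anonymous cycles of size $4$: for two source messages $m_1\neq m_2$, node $c$ hears $k_1\ge 1$ and $k_2\ge 1$ loud beeps respectively, and (assuming $k_1\le k_2$ without loss of generality) the adversary delivers them in the same rounds in both executions up to the round when $c$ outputs $m_1$, delivering the remaining $k_2-k_1$ loud beeps only afterwards; at that round $c$ has the same history in both executions, so it outputs $m_1$ also for $m_2$, a contradiction. Hence a correct $A$ must prescribe pairwise distinct behaviours to the $L-2$ labels in $\{2,\dots,L-1\}$ (for neighbourhood $\{0,1\}$).

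It remains to count behaviours. In any execution of cost at most $K$ on a cycle $C_{\beta,\delta}$, a neighbour of the source hears at most $K$ beeps and sends at most $K$ beeps; in this bounded regime its behaviour is captured by a function assigning, to each of the fewer than $2^{K+1}$ finite soft/loud sequences of heard beeps, the (length at most $K$) sequence of beeps it sends before hearing the next one, so there are fewer than $(2^{K+1})^{2^{K+1}}=2^{2^{O(K)}}$ such behaviours. If $L-2>2^{2^{cK}}$, where $c$ is the constant hidden in this bound, then two of the $L-2$ labels would receive the same behaviour, contradicting the previous paragraph; hence $L-2\le 2^{2^{cK}}$, i.e. $K\ge\tfrac1c\log_2\log_2(L-2)=\Omega(\log\log L)$, and this bound is attained by one of the cycles $C_{\beta,\delta}$. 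I expect the main obstacle to be making the counting rigorous -- in particular, arguing that in bounded-cost executions a node's relevant behaviour really is a finite object of the stated size, so that absolute round numbers play no role (this can be arranged by restricting to a canonical family of adversaries); the lockstep step and the reduction to the anonymous-cycle argument are then routine.
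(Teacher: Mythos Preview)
Your proposal is correct and follows the same overall strategy as the paper: a pigeonhole argument over the finitely many behaviours available under a cost budget $K$ forces two labels to act alike at the source's neighbours, after which the anonymous $4$-cycle impossibility applies verbatim. The paper's execution differs from yours in one simplifying trick and in the granularity of the counting. For the trick: rather than maintaining a full lockstep between $b$ and $d$ (which, as you note, drags in beeps that $c$ may send back to them), the paper has the adversary withhold all of $c$'s beeps until after $c$ outputs, so that $b$ and $d$ hear only the source; this severs the feedback loop and removes most of your self-identified obstacle about absolute round numbers, since the only timing left to normalise is that of the source's beeps, which the adversary places in consecutive rounds. For the counting: with $c$ silenced, it suffices to find two labels whose nodes send the same \emph{number} of beeps on each possible input sequence from the source, so the paper matches the coarser map $\Phi_\ell\colon X\to N$ with $|X|\le 2^{K}$ and $N=\{0,\dots,K\}$, giving roughly $K^{2^K}$ functions instead of your $(2^{K+1})^{2^{K+1}}$ full reactive behaviours. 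Both countings collapse to $K=\Omega(\log\log L)$, so either route works; the paper's is just a bit cleaner to make rigorous.
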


\begin{proof}
Consider a  broadcasting algorithm $A$ working for all neighborhood-aware cycles of size 4.
Suppose that the cost of algorithm $A$ in all such cycles is at most $\frac{1}{2}\log \log L$.
Consider a cycle of size 4, and call its nodes $a,b,c,d$, in clockwise order. Suppose that node $a$ is the source. 
Let 0 be the label of node $a$, and let $L-1$ be the label of node $c$.
The adversary delivers all beeps possibly sent by node $c$, only after this node  outputs the source message. 
Hence, before the decision by node $c$, nodes $b$ and $d$ hear only beeps from the source $a$.
The adversary delivers all beeps sent by node $a$ in consecutive rounds. Since node $a$ can send at most $\frac{1}{2}\log \log L$ beeps,
the set $X$ of possible sequences of beeps heard by nodes $b$ and $d$ has size at most $\sqrt{\log L}$.
Let $N=\{0,1,\dots, \lfloor \frac{1}{2}\log \log L \rfloor\}$. Since each of the nodes $b$ and $d$ can send at most $\frac{1}{2}\log \log L$ beeps,
the number of beeps sent by each of these nodes must be an integer from the set $N$.
For any label $\ell \in \{1,\dots ,L-2\}$, let $\Phi _\ell : X \longrightarrow N$ be the function defined as follows: $\Phi _\ell (x)$ is the number of beeps sent by the node $b$ or $d$, if it has label $\ell$,
and if it obtained the sequence $x$ of beeps. There are $|N|^{|X|} <L-2$ such functions, for sufficiently large $L$. Hence there exist labels $\ell_1 \neq \ell_2$ from the set $ \{1,\dots ,L-2\}$, for which $ \Phi _{\ell_1}= \Phi _{\ell_2}$. Assign these labels to the nodes $b$ and $d$. In the obtained cycle $C$, nodes $b$ and $d$ send the same number of beeps, regardless of the sequence of beeps obtained from $a$. In particular, this will happen in two executions, $E_1$ and $E_2$, of algorithm $A$ on the cycle $C$, where execution $E_1$ corresponds to source message $m_1$, and
execution $E_2$ corresponds to source message $m_2$,  for $m_1 \neq m_2$. 

In both executions, the adversary delivers consecutive beeps from $b$ and from $d$ in the same rounds.
As a result, node $c$ hears only loud beeps: $k_1$ of them in execution $E_1$, and $k_2$ of them in execution $E_2$. 
Without loss of generality, suppose that $k_1 \leq k_2$.
The choice of the rounds of delivery of bits from $b$ and $d$ is as follows.
In execution $E_1$ these are consecutive rounds $r,r+1,\dots,r+k_1-1$, starting from some round $r$. Suppose that $s$ is the round in which  node $c$ correctly outputs message $m_1$.
Then, in execution $E_2$, the adversary delivers the first $k_1$ beeps from $b$ and $d$ in rounds $r,r+1,\dots,r+k_1-1$, and the remaining $k_2-k_1$ beeps in rounds
$t+1,\dots, t+k_2-k_1$, where $t=\max(s,,r+k_1-1)$. In round $s$, node $c$ has the same history in executions $E_1$ and $E_2$: it heard a loud beep in the same rounds, in both these executions.
Hence, in execution $E_2$, it incorrectly outputs the message  $m_1$.
\end{proof}

The following result gives a lower bound on the cost of any broadcasting algorithm in neighborhood-aware networks. 
\begin{theorem}
For arbitrarily large integers $n$, there exist $n$-node neighborhood-aware networks for which every broadcasting algorithm has cost 
$\Omega(n\log M +n\log\log L)$.
\end{theorem}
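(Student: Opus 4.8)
The plan is to bound the two summands separately and combine them with $\max(a,b)\ge (a+b)/2$: it suffices to produce, for an arbitrary broadcasting algorithm $A$, one $n$-node neighborhood-aware network on which $A$ has cost $\Omega(n\log M)$ and one on which it has cost $\Omega(n\log\log L)$; whichever of the two forces the larger cost then witnesses the claimed bound $\Omega(n\log M+n\log\log L)$.

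\textbf{The $\Omega(n\log M)$ term.} I would take the network to be the path $v_0-v_1-\cdots-v_{n-1}$ rooted at the source $v_0$ and run the counting argument behind Lemma~\ref{log M} in parallel on all $n-1$ edges. Consider the adversary that delivers every beep sent ``towards'' larger indices without delay and holds every beep sent ``towards'' the source until all nodes have output. Under it, the sequence $\sigma_i(m)$ of beeps emitted by $v_i$ when the source message is $m$ is a function of $m$ alone, and the history of $v_{i+1}$ up to the moment it outputs is a function of $\sigma_i(m)$ alone; since $v_{i+1}$ must output $m$, the map $m\mapsto\sigma_i(m)$ is injective for $i=0,\dots,n-2$. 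Any family of $M$ distinct $\{s,l\}$-words has total length $\Omega(M\log M)$ (at most $2^{j}$ of them have length $j$), so $\sum_{m}\sum_{i=0}^{n-2}|\sigma_i(m)|=\Omega\big((n-1)M\log M\big)$; averaging over the $M$ messages yields a message $m$ with $\sum_{i}|\sigma_i(m)|=\Omega(n\log M)$, and since the $v_i$ are distinct this is a lower bound on the cost of $A$ in that execution.

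\textbf{The $\Omega(n\log\log L)$ term.} Here I would use a caterpillar: a spine $u_0-u_1-\cdots-u_k$ rooted at $u_0$ with a pendant $4$-cycle $u_i-p_i-q_i-r_i-u_i$ hanging off each $u_i$, so $n=\Theta(k)$. Put $B=\lfloor\frac12\log\log L\rfloor$. First fix the labels of the spine nodes (with $u_0$ given some fixed label) and of the $q_i$. Imitating the counting in Lemma~\ref{log L}, for each $i$ I would then choose the labels of $p_i$ and $r_i$ to be \emph{twins}: for every sequence of at most $B$ beeps received from $u_i$ (and nothing from $q_i$), a node labeled like $p_i$ and a node labeled like $r_i$ --- both with neighbor labels those of $u_i$ and $q_i$ --- emit the same number, truncated at $B+1$, of beeps towards $q_i$. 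The number of such (truncated) response functions is $(B+2)^{2^{B+1}}=L^{o(1)}$, so when $L$ is large enough relative to $n$ a pigeonhole leaves, for each $i$, a huge class of mutually twin labels, from which one greedily selects $k$ pairwise disjoint twin pairs. Call diamond $i$ \emph{expensive} in an execution if one of $u_i,p_i,r_i$ emits more than $B$ beeps, and \emph{cheap} otherwise; the point of twinning is that a cheap diamond forces $p_i$ and $r_i$ to emit \emph{equal} numbers of beeps towards $q_i$, which the adversary delivers pairwise combined, so that $q_i$ hears \emph{only loud beeps}. Now run $A$ with two messages $m_1\ne m_2$, under the adversary that delivers everything ``forward'' promptly (combining the two relay streams inside each cheap diamond) and holds everything ``backward'', in particular every beep of every $q_i$, until all nodes have output. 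If some diamond were cheap in both executions, then $q_i$ would hear only loud beeps in both, and the indistinguishability argument of Lemma~\ref{log L} --- if the two loud-beep counts agree $q_i$ cannot separate $m_1$ from $m_2$; if they differ, deliver the shorter prefix at the same rounds in both executions and the surplus only after $q_i$ has decided --- would force $q_i$ to err, contradicting correctness of $A$. Hence every diamond is expensive in at least one of the two executions, so one of them has at least $k/2$ expensive diamonds; as $u_i,p_i,r_i$ are distinct across diamonds, that execution costs at least $(k/2)(B+1)=\Omega(n\log\log L)$.

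\textbf{Main obstacle.} The delicate part is the $\log\log L$ construction, specifically making the twin/response-function machinery rigorous. The response functions must be defined on a \emph{finite} set of short inputs and with a \emph{bounded} (truncated) range, so that their number stays below the roughly $L$ available labels and also leaves room to reserve the spine and $q_i$ labels and to extract $k=\Theta(n)$ disjoint twin pairs --- this forces $L$ to be large relative to $n$. Equally, one must argue that the omniscient asynchronous adversary can realize the idealized delivery pattern on which the response-function definition rests: each $u_i$'s beeps reach $p_i$ and $r_i$ in the same rounds automatically (a beep carries a single delay), but one has to schedule deliveries --- plausibly in disjoint time windows, one per diamond, large enough to absorb any execution of cost below the target bound --- so that the input fed to each $p_i,r_i$ occupies a fixed, short band of rounds (handling the fact that nodes may read the global round counter), while deferring every backward beep, including every $q_i$'s, past the point where all nodes have output. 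The $\Omega(n\log M)$ part, once the ``defer the backward beeps'' device is in place, is essentially Lemma~\ref{log M} amplified by averaging.
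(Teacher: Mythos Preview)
Your high-level plan---exhibit one $n$-node network forcing $\Omega(n\log M)$ and another forcing $\Omega(n\log\log L)$, then keep whichever dominates---is a legitimate alternative to the paper's single graph $G_k$ (a length-$k$ path followed by $k$ vertex-disjoint $4$-cycles, all linked by bridge edges, so that Lemmas~\ref{log M} and~\ref{log L} apply piecewise and the contributions add). The gap is that both of your adversaries assume \emph{directional} delivery: in this model a beep carries one adversarially chosen delay and reaches \emph{all} neighbours in the same round, so you cannot ``hold every beep sent towards the source'' while delivering the same beep ``towards larger indices''. On the path this is fatal as written: if you delay $v_{i+2}$'s beeps so that $v_{i+1}$'s pre-output history depends only on $\sigma_i(m)$, those same beeps are withheld from $v_{i+3}$, which must then output with no message-dependent input---so your injectivity claim for all $i$ simultaneously cannot be realised by one valid schedule. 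In the caterpillar, because $u_i$ is shared between spine and diamond, $p_i$'s beeps land at $u_i$ and at $q_i$ in the same round; the pairing you stage at $q_i$ is therefore fed back into $u_i$, perturbing what $u_i$ subsequently sends to $p_i,r_i$ and undercutting the fixed ``$\le B$ beeps from $u_i$'' on which your response-function count rests. Your Main-obstacle paragraph anticipates scheduling and label-budget issues but not this non-directionality, and that is where the argument actually breaks.

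The paper avoids the difficulty structurally rather than by clever scheduling: its $4$-cycles are vertex-disjoint from the path and from one another and attached only by bridges, so each cycle sits (up to pendant bridge edges at $a_i$ and $c_i$) in the Lemma~\ref{log L} configuration, and every path edge and inter-cycle edge is a cut to which the $\log M$ reasoning applies. If you keep the two-graph plan, the clean repair is to replace the caterpillar by such a bridge-linked chain of disjoint $4$-cycles (so the diamond no longer shares a vertex with the spine), and to rebuild the path argument with an adversary that the model actually permits.
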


\begin{proof}
For any positive integer $k$, consider the graph $G_k$ defined as follows. Let $P_k$ be a simple path of length $k$, with extremities $a$ and $b$.
Consider pairwise disjoint copies $C_1,\dots,C_k$ of the cycle of size 4, whose all nodes are distinct from nodes of the path. Let $a_i, b_i, c_i, d_i$ be the nodes of the $i$th copy in clockwise order.
Join the node $a_1$ to the node $b$ by an edge, and for every $1 \leq i < k$, join the node $c_i$ to the node $a_{i+1}$ by an edge. The obtained graph has $n \in \Theta(k)$ nodes.
We now assign the labels to nodes of $G_k$ as follows. Nodes $b_i$ and $d_i$ in cycles 
$C_i$, for $i=1,\dots,k$, are assigned distinct labels by induction. For any $i$, we consider the set of all labels that were not used previously and find among them two labels $\ell_1 \neq \ell_2$ for which $ \Phi _{\ell_1}= \Phi _{\ell_2}$, where $ \Phi _{\ell}$, for any label $\ell$, was defined in the proof of Lemma  \ref{log L}. This can be done similarly as in the quoted proof, because the number of still available
labels is $\Theta(L)$. Finally, nodes of the path and all nodes $a_i$ and $c_i$ are assigned consecutive distinct labels among the remaining pool of labels.

Let the node $a$ be the source, and consider any broadcasting algorithm on graph $G_k$.
By Lemma \ref{log M}, each node of the path, other than $b$, has to transmit $\Omega(\log M)$ beeps, for otherwise the next node cannot get the message.
By Lemma  \ref{log L}, the total cost of the algorithm in each subgraph $C_i$, for $i<k$,  must be $\Omega(\log\log L)$, for otherwise the nodes of the next copy cannot get the message.
(Note that edges of the path $P_k$ and edges joining consecutive copies of the cycle, are bridges in $G_k$.) Hence the total cost of the algorithm is $\Omega(k\log M +k\log\log L)=\Omega(n\log M +n\log\log L)$.
\end{proof} 


\section{Full-knowledge networks}

In this section we consider broadcasting in networks whose nodes have the entire labeled map of the network, and know the identity of the source.
With this complete knowledge, all nodes can agree on the same spanning tree $T$ of the network, rooted at the source. (All trees rooted at the source can be canonically 
coded by binary strings, and the tree $T$ can be chosen as that with lexicographically smallest code.) Let $S$ be a DFS traversal of the tree $T$ in which children
of every node are explored in increasing order of their labels. The Eulerian tour of the tree $T$ corresponding to this traversal can be represented as a sequence
$(a_1,\dots, a_{2(n-1)})$ of length $2(n-1)$ of node labels with repetitions, where $a_i$ corresponds to the $i$th edge traversal in the tour, from the node with label $a_i$ to the node with label $a_{i+1}$. 

The only message circulating in the network is the message $[m]$, where $m$ is the source message, and $[m]$ is the encoding of this integer, described in Section 5.
The instruction send  $[m]$ is the procedure of sending beeps of the encoding $[m]$ in consecutive rounds.
Similarly as in  Algorithm {\tt Neighborhood-aware}, the disjointness property is satisfied, and hence each message can be correctly decoded by adjacent nodes.
The idea of the algorithm is the following. Every node knows to which terms of the sequence $(a_1,\dots, a_{2(n-1)})$ its label corresponds. It sends the message $[m]$ when the turn of such a
term of the sequence comes. (Many nodes send messages many times.) In order to know when this happens, the node computes how many previous messages it should get before from all adjacent nodes, and when this  number of messages is received, it proceeds with the execution of the send $[m]$ instruction corresponding to the given term of the sequence.

The algorithm is executed by a node with label $\ell$, when the source message is $m$. The pseudocode of the algorithm follows.

\vspace{2ex}
\noindent
{\bf Algorithm} {\tt Full-knowledge}

\vspace{2ex}
\noindent
{\bf if} the executing node is not the source  {\bf then}\\
\hspace*{1cm}when a message $[m]$ is decoded for the first time, then\\
\hspace*{2cm}output $message$ as the source message\\ 
identify all positions of label $\ell$ in the sequence $(a_1,\dots, a_{2(n-1)})$\\
let $i_1,\dots, i_r$ be these positions\\
let $x_1$ be the number of indices $1\leq j <a_1$, corresponding to labels $a_j$ of nodes adjacent to the node with label $\ell$\\
for $1< i \leq r$, let $x_i$ be the number of indices $a_{i-1}< j <a_i$, corresponding to labels $a_j$ of nodes adjacent to the node with label $\ell$\\ 
for $1< i \leq r$, let $y_i= \sum_{t=1}^i x_t$\\
{\bf for} $k=1$ {\bf to} $r$ {\bf do}\\
\hspace*{1cm}when a total of $y_k$ messages $[m]$ is received then send $[m]$
 \hfill$\diamond$

\begin{theorem}
Upon completion of Algorithm {\tt Full-knowledge} in an arbitrary $n$-node graph, every node correctly outputs the source message.
The cost of the algorithm is $O(n\log M)$.
\end{theorem}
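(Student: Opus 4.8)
The plan is to follow the same skeleton as the proof of Theorem~\ref{aware}: establish the disjointness property first, then read off correct decoding, broadcast correctness and termination, and finally count beeps. Write $(a_1,\dots,a_{2(n-1)})$ for the label sequence of the Eulerian tour, with the convention $a_{2(n-1)+1}:=a_1$, which is the label of the source. Two elementary facts will be used throughout. First, any two consecutive entries $a_j,a_{j+1}$ are adjacent in $T$, hence adjacent in the graph. Second, the threshold $y_k$ that a node with label $\ell$ computes for its $k$-th occurrence (at position $i_k$) equals exactly the number of indices $j<i_k$ with $a_j$ adjacent to $\ell$: the indices $i_1,\dots,i_{k-1}$ that are excluded by the definition of the $x_i$ all carry label $\ell$, which is not adjacent to itself, so removing them changes nothing.

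The core of the argument is a strong induction on $j=1,\dots,2(n-1)$ establishing: the $j$-th ``send $[m]$'' instruction to begin, over all nodes and in real time, is the one performed by the node $a_j$ at its occurrence at position $j$; it begins only after every beep of the $(j-1)$-th ``send $[m]$'' has been delivered; and at the moment it begins, $a_j$ has received exactly $y$ copies of $[m]$, where $y$ is $a_j$'s threshold for that occurrence. The base case is immediate: $a_1$ is the source, whose threshold for position~$1$ is $0$, so it sends at once, and before any other node could, since every other node first occurs at a position $p\ge 2$ and then $a_{p-1}\in N(a_p)$ forces its first threshold to be at least $1$. For the step, the induction hypothesis yields disjointness of the first $j-1$ send segments, so every node was silent at the delivery time of each of the first $j-1$ copies of $[m]$; hence, once the $(j-1)$-th send is fully delivered, every node $w$ has received exactly one copy for each index $j'\le j-1$ with $a_{j'}\in N(w)$. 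The first still-pending occurrence of $w$, at some position $p\ge j$, triggers its send at this moment iff no index strictly between $j-1$ and $p$ carries a neighbor of $w$; this holds vacuously for $p=j$, so $a_j$ does start, and it fails for $p>j$ because $p-1\ge j$ and $a_{p-1}\in N(a_p)=N(w)$. Thus $a_j$ is the unique node that may begin a send now, and since no beep is in transit until $a_j$ actually sends, no other node begins during the segment $\sigma(a_j)$; this extends disjointness to the first $j$ segments and closes the induction. As the adversary's delays are all finite, each of the $2(n-1)$ sends completes in finite time, so the algorithm terminates.

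With the disjointness property in hand, the encoding of Section~5 guarantees that every ``send $[m]$'' is decoded, exactly once and correctly, by each neighbor of the sender. Every node occurs in the Eulerian tour, and at the step where the tour first enters a non-source node $v$, the then-sending node is $v$'s parent in $T$ and hence a neighbor of $v$; so $v$ receives $[m]$ and outputs the correct source message. Finally, the algorithm performs exactly $2(n-1)$ ``send $[m]$'' instructions, one per position of the tour, and by the definition of the code in Section~5 the string $[m]$ consists of $O(\log M)$ beeps; therefore the total cost is $O(n\log M)$.

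The main obstacle I anticipate is the induction establishing disjointness, whose point is to rule out that some node reaches its required message count ahead of schedule because of an unusually fast delivery. The decisive structural feature preventing this is that the predecessor $a_{p-1}$ of any occurrence $p$ of a node is necessarily one of that node's neighbors, so the copy of $[m]$ produced at step $p-1$ is always among those the node is waiting for and cannot arrive before step $p-1$ has run. Everything else --- correct decoding via the Section~5 code, termination, and the beep count --- is routine once disjointness is available.
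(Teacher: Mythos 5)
Your proof is correct and follows the same route as the paper: disjointness of the send segments along the Eulerian tour gives correct decoding, the DFS/Eulerian structure gives that every node is reached, and the cost is $2(n-1)$ messages of $O(\log M)$ beeps each. The only difference is one of detail: the paper merely asserts the disjointness property (``similarly as in Algorithm {\tt Neighborhood-aware}''), whereas you actually prove it by induction on the tour positions, using the key observation that the predecessor $a_{p-1}$ of any occurrence $p$ is a neighbor of $a_p$ and hence contributes to its threshold --- a welcome elaboration, not a different argument.
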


\begin{proof}
The correctness of the algorithm follows from the fact that nodes send messages whenever their turn comes in the sequence $(a_1,\dots, a_{2(n-1)})$ 
that corresponds to an Eulerian tour of a spanning tree $T$, and from the disjointness property guaranteeing that the source message is always correctly decoded.
The total number of messages sent is $2(n-1)$. Since each message corresponds to $O(\log M)$ bits, the total cost of the algorithm is $O(n\log M)$.
\end{proof}

The following proposition shows that the cost of Algorithm {\tt Full-knowledge} is optimal in full-knowledge networks.

\begin{proposition}
For arbitrary integers $n \geq 2$ there exist $n$-node graphs for which the cost of any broadcasting algorithm is $\Omega(n\log M)$.
\end{proposition}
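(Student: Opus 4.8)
The plan is to reduce the general $n$-node lower bound to the two-node case handled by Lemma~\ref{log M}, by exhibiting a family of graphs in which $\Omega(n)$ separate ``bottleneck'' edges each force $\Omega(\log M)$ beeps. First I would take the path $P_n$ on $n$ nodes $v_0, v_1, \dots, v_{n-1}$, with $v_0$ the source, and assign arbitrary distinct labels. Every edge of this path is a bridge, so information about the source message can only cross each edge $(v_{i-1}, v_i)$ via beeps heard by $v_i$ originating (directly or transitively) from the source side.

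The key step is to argue that, for each $i$, the beeps that $v_{i-1}$ sends and that are delivered to $v_i$ before $v_i$ outputs must number $\Omega(\log M)$. The clean way to see this is an adversary/indistinguishability argument localized at edge $(v_{i-1},v_i)$: the adversary delivers all beeps along the path in a left-to-right ``wave,'' so that when $v_i$ makes its decision, the only beeps it has heard are those from $v_{i-1}$, delivered in consecutive rounds with no interference (hence each soft beep heard as soft, each loud as loud — so $v_i$ sees exactly the string $v_{i-1}$ emitted toward it). The string $v_{i-1}$ emits is itself a deterministic function of what $v_{i-1}$ heard, i.e., of $v_{i-2}$'s string, and so on back to the source. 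If $v_{i-1}$ ever emits a string of length $< \frac12 \log M$ for two distinct source messages $m_1 \ne m_2$ (along these canonical executions), pigeonhole over the fewer-than-$M$ short beep strings forces two messages to produce the identical string crossing that edge, hence $v_i$ (and everything beyond it) outputs the same value for both — a contradiction exactly as in Lemma~\ref{log M}. Summing over $i = 1, \dots, n-1$ gives total cost $\Omega(n \log M)$.

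The main obstacle I anticipate is handling the possibility of \emph{feedback}: in the asynchronous beeping model, $v_i$ may itself beep and have those beeps delivered back to $v_{i-1}$, altering $v_{i-1}$'s subsequent behavior, so ``$v_{i-1}$'s emitted string is a function only of the source side'' is not automatic. The fix is to have the adversary delay \emph{all} beeps traveling leftward (or, more precisely, all beeps sent by $v_i$ after $v_i$ has heard its first beep) until after $v_i$ outputs — mirroring the device used in the ad-hoc lower bound (Theorem with the sink). Under that adversarial discipline, $v_{i-1}$'s history up to the moment $v_i$ decides is genuinely independent of anything on the $v_i$ side, so its emitted prefix toward $v_i$ is well-defined per source message, and the counting argument goes through. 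One must check these delay constraints for all edges are mutually consistent — they are, since one can order all deferred beeps to be delivered after the last node outputs, in sending order per node, respecting the adversary's rules.

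Alternatively, and perhaps more simply, one can use the star $K_{1,n-1}$ with the source at the center: here $\Omega(\log M)$ is needed on each of the $n-1$ leaf edges by the same two-node argument (the adversary delivers the center's beeps to each leaf in isolation), giving $\Omega(n\log M)$ immediately with no feedback subtlety since leaves have degree one. I would present the path version if a connected ``spread-out'' example is wanted, but the star is the cheaper route to the stated bound $\Omega(n\log M)$.
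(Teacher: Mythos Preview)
Your path argument is essentially the paper's proof: the paper also takes $P_n$ with the source at one end and invokes Lemma~\ref{log M} on each edge to force $\Omega(\log M)$ beeps out of every non-terminal node, summing to $\Omega(n\log M)$. Your extra care about feedback (delaying right-to-left information until after each $v_i$ decides) is more explicit than the paper's one-line appeal to the lemma, but the approach is the same.

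However, your proposed star shortcut is \emph{wrong}. In this model a beep is delivered to \emph{all} neighbors of the sender with the \emph{same} delay (``the delivery delay at all neighbors is the same for a given beep''). So in $K_{1,n-1}$ with the source at the center, the adversary cannot ``deliver the center's beeps to each leaf in isolation'': a single sequence of $\Theta(\log M)$ beeps from the center reaches every leaf simultaneously, and the leaves need send nothing. The total cost on the star is therefore only $\Theta(\log M)$, not $\Omega(n\log M)$. The star does not give the bound; you must stick with the path (or some other graph with $\Omega(n)$ bridges in series).
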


\begin{proof}
Consider the simple path $P_n$ with $n$ nodes, one of whose extremities is the source.
Note that Lemma \ref{log M} holds for full-knowledge networks as well.
By Lemma \ref{log M}, each node of the path, other than the last node, has to transmit $\Omega(\log M)$ beeps, for otherwise the next node cannot get the message
correctly. Hence the cost of any broadcasting algorithm is $\Omega(n\log M)$.
\end{proof}

\section{Conclusion}

We considered the cost of asynchronous broadcasting in networks with four different levels of knowledge: anonymous, ad-hoc, neighborhood-aware, and full-knowledge.
We proved that broadcasting in anonymous networks is impossible, and we showed upper and lower bounds on the cost of broadcasting for the other three levels of knowledge.
Our results show cost separations between all of them. While the bounds for full-knowledge networks are asymptotically tight, the other bounds are not, and designing optimal-cost broadcasting
algorithms for ad-hoc and for neighborhood-aware networks is a natural open problem.

 

\bibliographystyle{plain}


\end{document}